\documentclass[runningheads]{llncs}

\usepackage[T1]{fontenc}
\usepackage{graphicx}
\usepackage{boxedminipage}
\usepackage[noend]{algpseudocode}
\newcommand{\opt}{\ensuremath{\mathtt{opt}}}
\usepackage{comment}
\usepackage{pgfplots}
\pgfplotsset{compat = newest}
\usepackage{amsmath}
\usepackage{amssymb}
\usepackage{multirow}
\usepackage{threeparttable}

\begin{document}
\title{Budget Feasible Mechanisms for Procurement Auctions with Divisible Agents}

\titlerunning{Budget Feasible Mechanisms with Divisible Agents}

\author{Sophie Klumper\inst{1,2}\orcidID{0000-0002-2375-5313} \and \\ Guido Sch\"{a}fer\inst{1,3} \orcidID{
0000-0002-1923-4902}}
\authorrunning{S.~Klumper and G.~Sch\"{a}fer}

\institute{Networks and Optimization Group, Centrum Wiskunde \& Informatica (CWI)
\and
Department of Mathematics, Vrije Universiteit Amsterdam
\and
Institute for Logic, Language and Computation, University of Amsterdam \\ The Netherlands \\
\email{\{s.j.klumper,g.schaefer\}@cwi.nl}\\
}

\maketitle             

\begin{abstract} 
We consider budget feasible mechanisms for procurement auctions with additive valuation functions. For the divisible case, where agents can be allocated fractionally, there exists an optimal mechanism with approximation guarantee $e/(e-1)$ under the small bidder assumption. We study the divisible case without the small bidder assumption, but assume that the true costs of the agents are bounded by the budget. This setting lends itself to modeling economic situations in which the goods represent time and the agents' true costs are not necessarily small compared to the budget. Non-trivially, we give a mechanism with an approximation guarantee of 2.62, improving the result of 3 for the indivisible case. Additionally, we give a lower bound on the approximation guarantee of 1.25. We then study the problem in more competitive markets and assume that the agents' value over cost efficiencies are bounded by some $\theta \ge 1$. For $\theta \le 2$, we give a mechanism with an approximation guarantee of 2 and a lower bound of 1.18. Both results can be extended to settings with different agent types with a linear capped valuation function for each type. Finally, if each agent type has a concave valuation, we give a mechanism for which the approximation guarantee grows linearly with the number of agent types.

\keywords{Mechanism design \and Procurement auction \and Budget feasible mechanism \and Divisible agents \and Knapsack auction \and Additive valuations \and Concave valuations}
\end{abstract}

\section{Introduction}

We consider procurement auctions in which the auctioneer has a budget that limits the total payments that can be paid to the agents. In this setting, we are given a set of agents $A = [n]$ offering some service (or good), where each agent $i \in A$ has a privately known cost $c_i$ and a publicly known valuation $v_i$. The auctioneer wants to do business with these agents and needs to decide with which agents $S \subseteq A$ to do so. In order for each agent $i \in S$ to comply, the auctioneer will have to make a payment $p_i$ to this agent $i$. The auctioneer has a total budget $B$ available for these payments. The goal of the auctioneer is to find a subset of agents $S \subseteq A$ that (approximately) maximizes the total value, while the total of the payments is at most $B$; such payments are said to be \emph{budget feasible}. As the costs of the agents are assumed to be private, agents may misreport their actual costs to their own advantage. The goal is to design a mechanism that computes this subset and budget feasible payments, such that the agents have no incentive to misreport their actual costs (i.e., the agents are truthful).

Basically, there are two standard approaches in the literature to study this problem: (i) in the \emph{Bayesian setting} it is assumed that the distributions of the agents’ true costs are known, and (ii) in the \emph{prior-free setting} it is assumed that nothing is known about these distributions. In this paper, we focus on the prior-free setting as it may not be possible to extract representative distributions of the agents' true costs. 

Notably, most previous studies focus on the \emph{indivisible case}, where the services (or goods) offered by the agents need to be allocated integrally. In contrast, the \emph{divisible case}, where the agents can also be allocated fractionally, received much less attention and has been studied only under the so-called \emph{small bidder assumption}, i.e., when the agents' costs are much smaller than the available budget. 
While this assumption is justified in certain settings (e.g., for large markets), it is less appropriate in settings where the agents' costs may differ vastly or be close to the budget. 
This motivates the main question that we address in this paper: 
Can we derive budget feasible truthful mechanisms with attractive approximation guarantees if the agents are divisible?

Singer \cite{singer} initiated the study of budget feasible mechanisms in the prior-free setting with indivisible agents and designed a deterministic mechanisms with an approximation guarantee of 5. Later, Chen et al.~\cite{chen} improved this to $2 + \sqrt{2}$. Almost a decade later, Gravin et al.~\cite{gravin} improved the approximation guarantee to $3$ and showed that no mechanism can do better than $3$ (when compared to the fractional optimal solution). 
All three papers also provide additional results such as randomized algorithms with improved approximation guarantees and mechanisms for the more general setting of submodular functions.

To the best of our knowledge, the divisible case of knapsack procurement auctions has not been studied so far. When an agent is offering a service, which can be interpreted as offering time, it is suitable to model this as a divisible good. As an example, consider the situation where an auctioneer has a budget available and wants to organize a local comedy show. Agents are offering to perform and for each agent the auctioneer has a valuation, which reflects the amusement when this agent performs. The agents have costs related to their performance which could consist of the invested time in their performance, attributes needed, travel costs, etc. In this example it makes sense for the auctioneer to have the option to select agents fractionally. After selecting one agent, the budget left might only be enough to let some agent perform half of what they are offering. Or the auctioneer might want to have at least three performances and must select agents fractionally to achieve this due to the budget constraint.

As mentioned, the prior-free setting with divisible goods has been studied by Anari et al.~\cite{Anari} under the small bidder assumption. More formally, if $c_{\max} = \max_{i \in A}\{c_i\}$ and $r = c_{\max}/B$, then the results are analysed for $r \rightarrow 0$. Anari et al.~\cite{Anari} give optimal mechanisms for both the divisible (deterministic) and indivisible (randomized) case, both with an approximation guarantee of $e/(e-1)$. They also mention that in the divisible case, no truthful mechanism with a finite approximation guarantee exists without the small bidder assumption. This can already be shown by an instance with budget $B$ and one agent with value $v$ and true cost $c>B$. An optimal fractional solution can achieve a value of $v\frac{B}{c}$. Therefore any truthful mechanism must allocate the agent fractionally in order to achieve some positive value and thus a finite approximation guarantee. Additionally, it must be that $p \le B$, with $p$ the payment of the agent, for the mechanism to be budget feasible. As we do not know $c$ or any bound on $c$, there is no way to bound $p$ and satisfy budget feasibility in this instance and at the same time achieve the same approximation guarantee in any other possible instance.

However, if we assume that the true costs of the agents are bounded by the budget, there is still a lot we can do without the small bidder assumption and there are settings in which it makes sense to have this assumption. If we revisit our example of an auctioneer wanting to organize a local comedy show, an internationally famous comedian with a true cost exceeding the budget will most likely not be one of the agents offering to perform. On the other hand, there might be a national well-known comedian offering to perform. This comedian might have a true cost that is smaller than the budget, but not much smaller than the available budget, which is what the small bidder assumption requires. 

\paragraph{Our Contributions.}

For the knapsack procurement auction with divisible goods and true costs bounded by the budget, we give a mechanism with an approximation guarantee of 2.62 in Section 3. Additionally, we proof that no mechanism can achieve an approximation guarantee better than 1.25. Although the divisible case gives more freedom in designing an allocation rule, improving the approximation guarantee compared to the indivisible case is non-trivial. In particular, it remains difficult to bound the threshold bids and the complexity of the payment rule increases compared to the indivisible case. Additionally, if an agent is allocated fractionally one needs to determine this fraction exactly while remaining budget feasible. A natural method is to determine this fraction based on the agents' declared costs, but then one must limit the influence that this method gives to the agents.

Proving the above mentioned lower bound of 1.25 on the approximation guarantee requires that the true costs of the agents differ significantly ($\epsilon\ll B$), a property that is used more often when proving lower bounds. 
Therefore, in Section 4, we introduce a setting in which the agents' efficiencies (i.e., value over cost ratios) are bounded by some $\theta \ge 1$. It is reasonable to assume that the efficiencies of the agents is somewhat bounded, as agents will cease to exist if they cannot compete with the other agents in terms of efficiency. Another interpretation of this setting is some middle ground between the prior-free and the Bayesian setting. It might be impossible to find representative distributions of the agents' true costs and address the problem in the Bayesian setting, but the auctioneer could have information about the minimum and maximum efficiencies in the market by previous experiences or market research. We give a mechanism with an approximation guarantee of 2 when the efficiencies are bounded by a factor $\theta \le 2$. For this case we also prove a lower bound of 1.18 on the approximation guarantee and generalize this for different values of $\theta$.

In Section 5, we extend our results to a model in which agents may have different types. This new model allows us to capture slightly more detailed settings. If we revisit the example of hosting a comedy show, this translates to each agent having a certain type of comedy that they preform. In order to set up a nice diverse program, the auctioneer wishes the jokes of a certain type to be limited. The auctioneer knows that, if at some point in time too many jokes of the same type are told, no additional value is added to the show. This is modeled by a linear capped valuation function for each type. We prove that for linear capped valuation functions, the mechanism of Section 3 can be slightly altered to achieve the same approximation guarantee.

Finally, to extend the model of Section 5 with diminishing returns, we assume each agent type has a concave (non-decreasing) valuation function in Section 6. This assumption increases the difficulty of bounding the payments and, as a result, the mechanism we give has an approximation guarantee that grows linearly with the number of agent types. If, however, there is only one type of agent, the mechanism of Section 3 can still be applied to obtain an approximation guarantee of 2.62. 

\begin{table}[t]
\setlength{\tabcolsep}{5pt}
\renewcommand{\arraystretch}{1.1}
\begin{center}
\caption{{\footnotesize Overview of the results obtained in this paper.}}
\begin{threeparttable}[hb]
\begin{tabular}{|l|l|c|c|c|} 
\hline
\multicolumn{2}{|l|}{\multirow{2}{*}{\bf Assumptions}} & \multirow{2}{*}{\bf Origin} & \multicolumn{2}{|c|}{\bf Approximation Guarantee}\\
\cline{4-5}
 \multicolumn{2}{|l|}{} & & Upper bound & Lower bound \\
\hline\hline
\multicolumn{2}{|l|}{\multirow{3}{*}{Indivisible}} & \cite{singer} & 5 & 2 \\
\multicolumn{2}{|l|}{} & \cite{chen} & $\approx 3.41$ & $\approx 2.41$  \\
\multicolumn{2}{|l|}{} & \cite{gravin} & 3 & $3^*$ \\ 
\hline\hline
Divisible & none or (1) & Sections 3 \& 5 & $\approx 2.62$ & $1.25$  \\ 
($c_i \le B\ \forall i$)  & $\theta \le 2$ & Section 4 & 2 & $\approx 1.18$ \\ 
 & (2)  & Section 6 & $O(t)$ & $1.25$  \\ 
\hline
\end{tabular}
\begin{tablenotes}
{\footnotesize
\item \hspace{0.5pt} $^*$Compared to the optimal fractional solution
\item (1) Multiple linear capped valuation functions
\item (2) Multiple concave valuation functions}
\end{tablenotes}
\end{threeparttable}
\end{center}
\vspace*{-.5cm}
\end{table}

\paragraph{Related Work.}

As mentioned earlier, settings with different valuation functions have been studied for the indivisible case. In the case of submodular valuation functions, Singer \cite{singer} gave a randomized mechanism with an approximation guarantee of 112. Again, Chen et al.~\cite{chen} improved this to $7.91$ and gave a deterministic exponential time mechanism with an approximation guarantee of $8.34$. Later Jalaly and Tardos~\cite{Jalaly} improved this to 5 and 4.56 respectively. For subadditive valuation functions, Dobzinski et al.~\cite{Dobzinski} gave a randomized and deterministic mechanism with an approximation guarantee of $O(\log^{2}n)$ and $O(\log^{3}n)$ respectively. Bei et al.~\cite{Bei} improved this to $O(\log n /( \log \log n))$ with a polynomial time randomized mechanism and gave a randomized exponential time mechanism with an approximation guarantee of 768 for XOS valuation functions. Leonardi et al.~\cite{Leonardi} improved the latter to 436 by tuning the parameters of the mechanism. 

Related settings have also been studied for the indivisible case. Leonardi et al.~\cite{Leonardi} consider the problem with an underlying matroid structure, where each element corresponds to an agent and the auctioneer can only allocate an independent set. Chan and Chen~\cite{Chan} studied the setting in which agents offer multiple units of their good. They regard concave additive and subadditive valuation functions. The setting in which the auctioneer wants to get a set of heterogeneous tasks done and where each task requires the performing agent to have a certain skill has been studied by Goel et al.~\cite{Goel}. They give a randomized mechanism with an approximation guarantee of 2.58, which is truthful under the small bidder assumption. Jalaly and Tardos~\cite{Jalaly} match this result with a deterministic mechanism. The results of Goel et al.~\cite{Goel} can be extended to settings in which tasks can be done multiple times and agents can perform multiple tasks. Related to this line of work is also the strategic version of matching and coverage, in which edges and subsets represent strategic agents, that Singer~\cite{singer} also studied. Chen et al.~\cite{chen} also studied the knapsack problem with heterogeneous items, were items are divided in groups and at most one item from each group can be allocated. Amanatidis et al.~\cite{Amanatidis} give randomized and deterministic mechanisms for a subclass of XOS problems. For the mechanism design version of the budgeted max weighted matching problem, they give a randomized (deterministic) mechanism with an approximation guarantee of 3 (4) and they generalize their results to problems with a similar combinatorial structure.

\section{Preliminaries}

We are given a (finite) set of agents $A$ offering some service (or good), where each agent $i \in A$ has a privately known non-negative cost $c_i$ and a non-negative valuation $v_i$. 
Each agent $i \in A$ declares a non-negative cost $b_i$, which they might use to misreport their actual cost $c_i$. 
Given the declared costs $\mathbf{b} = (b_i)_{i \in A}$, valuations $\mathbf{v} = (v_i)_{i \in A}$ and a budget $B > 0$, the problem is to design a mechanism that computes an allocation vector $\mathbf{x} = (x_i)_{i \in A}$ and a payment vector $\mathbf{p} = (p_i)_{i \in A}$.\footnote{It is important to realize that the mechanism only has access to the declared costs $\mathbf{b}$, as the actual costs $\mathbf{c}$ are assumed to be private information of the agents.}
The allocation vector $\mathbf{x}$ should satisfy $x_i \in [0,1]$ for all $i \in A$, where $x_i$ denotes the fraction selected of agent $i$. An element $p_i$ of the payment vector $\mathbf{p}$ corresponds to the payment of agent $i$. We want to (approximately) maximize the value of the allocation vector and the total of the payments to be within the budget. We assume that the true costs $\mathbf{c} = (c_i)_{i \in A}$ are bounded by the budget, i.e., $c_{i} \leq B$ for all $i \in A$. We also assume that the costs incurred by the agents are linear, i.e., given allocation vector $\mathbf{x}$ the cost incurred by agent $i$ is equal to $c_{i} x_{i}$. Therefore the utility $u_i$ of agent $i \in A$ is equal to $u_{i} = p_{i} - c_{i}  x_{i}$. The auctioneer has an additive valuation function, i.e., given allocation vector $\mathbf{x}$ the value derived by the auctioneer is $v(\mathbf{x}) = \sum_{i \in A} v_{i}  x_{i}$. The goal of each player $i \in A$ is to maximize their utility $u_i$, and as the costs of the agents are assumed to be private, they may misreport their actual costs to achieve this. If the agents are not strategic, i.e., the costs are publicly known, the above setting naturally corresponds to the fractional knapsack problem. 

We seek mechanisms that satisfy the following properties:
\begin{enumerate}
    \item \emph{Truthfulness:} for every agent $i$ reporting their true cost is a dominant strategy: for any declared costs $b_i$ and $\mathbf{b}_{-i}$ it holds that $p_i - x_i c_i \ge p'_i - x'_i c_i$, where ($x_i$, $p_i$) and ($x'_i$, $p'_i$) are the allocations and payments of agent $i$ with respect to the declared costs $(c_i, \mathbf{b}_{-i})$ and $(b_i, \mathbf{b}_{-i})$.
    \item \emph{Individual rationality:} for every agent $i$ it holds that $p_i \ge x_i b_i$, so under a truthful report agent $i$ has non-negative utility.
    \item \emph{Budget feasibility:} the total payment is at most the budget: $\sum_{i \in A} p_i \le B$.
    \item \emph{Approximation guarantee:} a mechanism has an approximation guarantee of $\gamma \ge 1$ if, for any declared costs $\mathbf{b}$, it outputs an allocation $\mathbf{x}$ such that $\gamma \cdot v(\mathbf{x}) \ge \opt$, where $\opt$ is the value of the optimal fractional solution with respect to $\mathbf{b}$.
    \item \emph{Computational efficiency:} the allocation and payment vector can be computed in polynomial time.
\end{enumerate}

Given declared costs $\mathbf{b} = (b_i)_{i \in A}$, let $x_i(\mathbf{b})$ be the fraction selected of agent $i \in A$. An allocation rule is said to be \emph{monotone non-increasing} if for each agent $i$, the fraction of $i$ selected can only increase as their declared cost decreases. More formally, for all $\mathbf{b}_{-i}$ and $b_i > b'_i$: $x_i(b_i, \mathbf{b}_{-i}) \le x_i(b'_i, \mathbf{b}_{-i})$. In order to design truthful mechanisms, we will exploit Theorem \ref{Tardos:Payments} and derive mechanisms that have a monotone non-increasing allocation rule. 

\begin{theorem}\label{Tardos:Payments}
(Archer and Tardos~\cite{tardos}) A monotone non-increasing allocation rule $x(\mathbf{b})$ admits a truthful payment rule that is individually rational if and only if for all $i, \mathbf{b}_{-i}$: $\int_{0}^{\infty} x_{i}(u,\mathbf{b}_{-i}) du < \infty$. In this case we can take the payment rule $p(\mathbf{b})$ to be
\begin{equation}
p_{i}(b_{i},\mathbf{b}_{-i}) = b_{i} \cdot x_{i}(b_{i},\mathbf{b}_{-i}) + \int_{b_{i}}^{\infty} x_{i}(u,\mathbf{b}_{-i}) du \quad \forall i.
\label{payments}
\end{equation}
\end{theorem}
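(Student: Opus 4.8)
The plan is to prove both directions of the equivalence by reducing to the standard single-parameter envelope argument, fixing an agent $i$ and the competitors' bids $\mathbf{b}_{-i}$ throughout. Write $x(b) := x_i(b, \mathbf{b}_{-i})$ and $p(b) := p_i(b, \mathbf{b}_{-i})$, and let $h(c) := p(c) - c \cdot x(c)$ denote the utility that agent $i$ obtains when its true cost is $c$ and it reports truthfully. Since the allocation rule is monotone non-increasing, $x$ is a non-increasing function of $b$, hence Riemann integrable on every finite interval. In this notation, truthfulness is the statement that $h(c) \ge p(b) - c \cdot x(b)$ for every report $b$, and individual rationality under a truthful report is exactly $h(c) \ge 0$.

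For the necessity direction, I would assume a truthful, individually rational payment rule exists and extract the integral. Comparing the truthful report at cost $c$ against a misreport $c' > c$, and symmetrically the truthful report at cost $c'$ against a misreport $c$, the two incentive inequalities sandwich the utility gap as $(c'-c)\,x(c') \le h(c) - h(c') \le (c'-c)\,x(c)$. Summing these bounds over a partition $0 = c_0 < c_1 < \dots < c_N = T$ telescopes the middle term to $h(0) - h(T)$, while the flanking sums are precisely the lower and upper Riemann sums of $\int_0^T x(u)\,du$; refining the partition and using monotonicity yields the identity $h(0) - h(T) = \int_0^T x(u)\,du$. Individual rationality gives $h(T) \ge 0$, so $\int_0^T x(u)\,du \le h(0) < \infty$ for all $T$, and letting $T \to \infty$ shows the integral converges. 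The same identity, with the normalization $\lim_{T\to\infty} h(T) = 0$ corresponding to the minimal individually rational payments, recovers $h(c) = \int_c^\infty x(u)\,du$ and hence the payment formula \eqref{payments}.

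For the sufficiency direction, I would assume $\int_0^\infty x(u)\,du < \infty$, define $p$ by \eqref{payments}, and verify the two properties directly. Individual rationality is immediate, since $p(b) - b \cdot x(b) = \int_b^\infty x(u)\,du \ge 0$ because $x \ge 0$. For truthfulness, fix a true cost $c$ and an arbitrary report $b$ and compare $h(c) = \int_c^\infty x(u)\,du$ with the deviating utility $p(b) - c \cdot x(b) = (b-c)\,x(b) + \int_b^\infty x(u)\,du$. When $b > c$ the required inequality reduces to $\int_c^b x(u)\,du \ge (b-c)\,x(b)$, and when $b < c$ it reduces to $\int_b^c x(u)\,du \le (c-b)\,x(b)$; both follow at once from monotonicity, since $x(u) \ge x(b)$ for $u \in [c,b]$ in the first case and $x(u) \le x(b)$ for $u \in [b,c]$ in the second. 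This shows the truthful report maximizes utility.

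The main obstacle is the necessity direction: turning the discrete incentive inequalities into the exact integral identity requires the Riemann-sum argument above and care that the sandwiching bounds converge to the same limit, which is where monotonicity (guaranteeing integrability) and the finiteness forced by individual rationality are both used. The sufficiency direction, by contrast, is a routine verification once the payment formula is in hand.
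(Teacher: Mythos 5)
Your proof is correct: the two-sided incentive inequalities, the Riemann-sum telescoping that pins down $h(0)-h(T)=\int_0^T x(u)\,du$ (using monotonicity for integrability and individual rationality for finiteness), and the two-case verification of sufficiency constitute exactly the standard Archer--Tardos/Myerson envelope argument for single-parameter cost-minimizing agents. The paper states this theorem as an imported result from Archer and Tardos~\cite{tardos} without reproducing a proof, so there is no in-paper argument to compare against; your write-up is a faithful and complete rendition of the proof in the cited source.
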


Next we introduce some additional notation that is used throughout the paper. We say that agent $i$ \emph{wins} if $x_i >0$ and \emph{loses} if $x_i = 0$. Note that $p_i =0$ for a losing agent $i$ if the above payment rule is used. We define the \textit{efficiency} of an agent as the ratio of their value over (declared) cost.
Agents with a high efficiency are preferable as, compared to agents with a lower efficiency, they (relatively) contribute a higher value per unit cost. If the cost of an agent $i$ is $0$, we define their efficiency as being $\infty$ such that $i$ has maximum efficiency among the considered agents. Whenever we order the agents according to decreasing efficiencies, we assume that ties are broken arbitrarily but consistently. We define $\opt(A,\mathbf{c})$ as the value of the optimal fractional solution regarding the set of agents $A$, costs $\mathbf{c} = (c_i)_{i \in A}$ and values $\mathbf{v} = (v_i)_{i \in A}$. $\opt_{-i}(A,\mathbf{c})$ is defined similarly, only regarding the set of agents $A \setminus \{i\}$. For the sake of readability, we omit the valuations $\mathbf{v}$ as an argument as these are publicly known. 

Given that we focus on the design of truthful mechanisms in this paper, we adopt the convention (which is standard in this context) and often use $\mathbf{c} = (c_i)_{i \in A}$ also to refer to the declared costs of the agents.

\section{Linear Valuation Functions}

\paragraph{The Mechanism.}

It is well-known that designing a truthful and individually rational mechanism comes down to designing a monotone allocation rule and implementing it together with the payment rule of Theorem \ref{Tardos:Payments}. When selecting an agent $i$ in this case, the first term of (\ref{payments}) is equal to cost incurred by agent $i$ and the second term can be interpreted as the amount agent $i$ is overpaid. It is beneficial for this second term to be as small as possible, in order to select more agents and achieve a better approximation guarantee. 

We give a mechanism that realizes this by imposing a \textit{threat}, denoted $\tau_i$, on each agent $i$ in the second part of the mechanism. 
If the declared cost of agent $i$ exceeds this threat, $i$ will loose. In this case, the second term of the payment formula is equivalent to $\int_{b_{i}}^{\tau_{i}} x_{i}(u,\mathbf{b}_{-i}) du$, and our goal thus is to choose $\tau_i$ close to the true cost $c_i$ to not overpay too much, while remaining at least as large to avoid a negative impact on the approximation guarantee.

In order to use these threats to bound the payments, $\tau_i$ must not increase when agent $i$ declares a higher cost. Note that bounding the payments with $\tau_i$ can still be done if $\tau_i$ decreases when agent $i$ declares a higher cost. In our mechanism, we use the following threat $\tau_i$ for an agent $i$, which is independent of the declared cost of $i$:
\[ \tau_i = v_i \frac{B}{\alpha(1+\beta) \opt_{-i}(N, \mathbf{c})}. \]
Note that this threat $\tau_i$ imposes an upper bound on the \textit{threshold bid} of agent $i$, i.e., the largest cost agent $i$ can declare such that $i$ wins. 

Additionally, we define for each agent $i$:
\[ \rho_{i}(N,\mathbf{c}) = \frac{v_i}{\opt_{-i}(N, \mathbf{c})},\]
which represents some measure of how valuable agent $i$ individually is. Note that an agent $i$ has no influence on their own ratio. 

Our allocation rule will either select one valuable agent, or select agents in a greedy manner according to efficiencies, which naturally leads to a monotone allocation rule. Let $\alpha \in (0,1]$ and $\beta > 0$ be some parameters which we fix later. Our mechanism is as follows: 

\begin{center}
\begin{boxedminipage}{11cm}
\textsc{\textbf{Divisible Agents (DA)}}
\begin{algorithmic}[1]
\small
\State Let $N=\{i \in A \; |\; c_i \le B\}$, $n = |N|$ and $i^* = \arg\max_{i\in N} \rho_{i}(N,\mathbf{c}) $
\If{$ \rho_{i^*}(N,\mathbf{c}) \ge \beta$} set $x_{i^*}=1$ and $x_{i}=0$ for $i\in N \setminus \{i^*\}$
\Else
\State Rename agents s.t. $\frac{v_1}{c_1} \ge \frac{v_2}{c_2} \ge \dots \ge \frac{v_n}{c_n}$
\State Compute $\mathbf{x}$ s.t. $v(\mathbf{x}) = \alpha \opt(N, \mathbf{c}) $ with $x_{i} = 1$ for $i<k$, $x_{k} \in (0,1]$

and $x_{i} = 0$ for $i>k$, $k \le n $
\For{$i \le k$} 
\If{$c_i > \tau_i $} set $x_i = 0$ \EndIf 
\EndFor
\EndIf
\State For $i \in N$ compute payments $p_i$ according to (\ref{payments})
\State \textbf{return} $(\mathbf{x}, \mathbf{p})$
\end{algorithmic}
\end{boxedminipage}
\end{center}

\begin{lemma}\label{TrInRa}
Mechanism \textsc{\textbf{DA}} is truthful and individually rational. 
\end{lemma}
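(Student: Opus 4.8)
The plan is to invoke Theorem~\ref{Tardos:Payments}: since \textsc{DA} computes payments exactly via~(\ref{payments}), it suffices to verify the two hypotheses of that theorem, namely that the allocation rule is monotone non-increasing and that $\int_{0}^{\infty} x_i(u,\mathbf{c}_{-i})\,du < \infty$ for every $i$ and $\mathbf{c}_{-i}$. The integrability condition is immediate: an agent declaring a cost above $B$ is excluded from $N$ and hence gets $x_i = 0$, while $x_i \le 1$ always, so the integral is at most $B$. The substance of the proof is therefore monotonicity, which I would establish by fixing an agent $i$ and the others' declarations $\mathbf{c}_{-i}$ and tracking $x_i$ as a function of $i$'s declared cost on $[0,B]$ (it is $0$ beyond $B$).

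I would first record two structural facts. First, both $\rho_i(N,\mathbf{c})$ and the threat $\tau_i$ depend only on $v_i$ and $\opt_{-i}(N,\mathbf{c})$, and are thus \emph{independent} of $i$'s own declared cost. Second, for every $j \ne i$ the ratio $\rho_j(N,\mathbf{c}) = v_j/\opt_{-j}(N,\mathbf{c})$ is weakly \emph{increasing} in $i$'s declared cost, because $\opt_{-j}(N,\mathbf{c})$ is a fractional knapsack optimum over a set containing $i$ and can only decrease when $i$'s cost rises. Hence $\max_{j\ne i}\rho_j$ is weakly increasing, and there is a single threshold above which $\max_{j\ne i}\rho_j \ge \beta$.

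I would then split into two cases. If $\rho_i \ge \beta$, the mechanism always takes the if-branch; agent $i$ is fully selected precisely when $i = i^*$, i.e.\ when $\rho_i \ge \max_{j\ne i}\rho_j$, and since the right-hand side is weakly increasing, $x_i$ drops from $1$ to $0$ at one threshold. If $\rho_i < \beta$, then whenever $\max_{j\ne i}\rho_j \ge \beta$ we are in the if-branch with $i^* \ne i$, so $x_i = 0$; for smaller costs we are in the else-branch, where it remains to show the greedy-with-threat allocation is monotone. Writing $T = \alpha\opt(N,\mathbf{c})$ for the target value and $W_{>i}$ for the total value of agents strictly more efficient than $i$, the greedy selection yields $x_i = \mathrm{clip}\big((T - W_{>i})/v_i,\,0,\,1\big)$. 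As $i$'s cost increases, $T$ weakly decreases (the knapsack optimum drops) and $W_{>i}$ weakly increases (more agents overtake $i$ in efficiency), so this expression is weakly decreasing; multiplying by the cost-independent threat condition $c_i \le \tau_i$, which zeroes $x_i$ only once for large enough cost, preserves monotonicity.

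The main obstacle I anticipate is the coupling between the two branches: one must rule out $x_i$ jumping back up when the winning regime changes. The clean resolution is the second structural fact above, which guarantees that the only way agent $i$ can leave the else-branch as its cost grows is for some competitor's $\rho_j$ to cross $\beta$, at which point $i$ necessarily loses; thus exiting the else-branch upward coincides with entering the if-branch at $x_i = 0$, and no upward jump occurs. A secondary technical point is the monotonicity of the greedy fractional selection against a target $T$ that itself depends on $i$'s cost, but the $\mathrm{clip}$ formula exposes both dependencies and shows they reinforce one another. Combining the cases gives a monotone non-increasing allocation rule, and Theorem~\ref{Tardos:Payments} then yields truthfulness and individual rationality.
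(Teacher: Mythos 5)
Your proposal is correct and follows essentially the same route as the paper: establish that the allocation rule is monotone non-increasing (using that $\rho_i$ and $\tau_i$ are independent of agent $i$'s own declared cost, that the competitors' ratios $\rho_j$ and the greedy target $\alpha\,\opt(N,\mathbf{c})$ move in the right direction as $c_i$ varies, and that an agent exiting the greedy branch does so only by losing), verify the integrability condition via the cutoff at $B$, and invoke Theorem~\ref{Tardos:Payments}. Your explicit clip formula and single-threshold argument are just a more structured presentation of the paper's perturb-up/perturb-down case analysis, so no substantive difference.
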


\begin{proof}
We start by showing that the allocation rule is monotone non-increasing. Suppose that, given declared costs $\mathbf{c} = (c_i)_{i \in A}$, $\mathbf{x}$ is computed by the mechanism. 

Suppose the mechanism selected agent $i^*$ and suppose $i^*$ decreases their declared cost to $c' < c_{i^*}$ and let $\mathbf{c'} = (c', \mathbf{c}_{-i^*})$. Note that the set $N$ does not change as long as $c' \le B$. The values $\rho_{i}(N, \mathbf{c'})$ of agents $i \neq i^*$ are the same or decrease, the value $\rho_{i^*}(N, \mathbf{c'})$ of $i^*$ is unaffected and therefore $i^*$ remains fully selected. Now suppose $i^*$ increases their declared cost $c'$ such that $c_{i^*} < c' \le B$. The values $\rho_{i}(N, \mathbf{c'})$ of agents $i \neq i^*$ are the same or increase and again the value $\rho_{i^*}(N, \mathbf{c'})$ of $i^*$ is unaffected. Therefore $i^*$ remains fully selected if $\forall i \in N$: $\rho_{i^*}(N, \mathbf{c'}) \geq \rho_{i}(N, \mathbf{c'})$, otherwise $i^*$ loses. When $i^*$ increases their declared cost to $c' > B$, $i^*$ definitely loses. 

Otherwise the mechanism computed $\mathbf{x}$ such that (initially) $v(\mathbf{x}) = \alpha \opt(N, \mathbf{c})$. Suppose some winning agent $i$ decreases their declared cost to $c' < c_i$ and let $\mathbf{c'} = (c', \mathbf{c}_{-i})$. The values $\rho_{j}(N,\mathbf{c'})$ of agents $j \neq i$ are the same or decrease and the value $\rho_{i}(N,\mathbf{c'})$ of $i$ is unaffected, so the allocation vector is still computed in the second part of the mechanism. The ratio $\frac{v_i}{c'}$ increases, so $i$ can only move further to the front of the ordering. In addition, $\opt(N, \mathbf{c'})$ can only increase and therefore $i$ will be selected to the same extent or more. Note that the threat will not deselect agent $i$, as the value of the threat does not change and $i$ decreases their declared cost. 
Now suppose $i$ increases their declared cost $c'$ such that $c_i < c' \le B$. The values $\rho_{j}(N,\mathbf{c'})$ of agents $j \neq i$ are the same or increase and the value $\rho_{i}(N,\mathbf{c'})$ of $i$ is unaffected. If the mechanism now selects agent $i^*$, agent $i$ loses. Otherwise, the ratio $\frac{v_i}{c'}$ decreases, so $i$ can only move further to the back of the ordering. In addition, $\opt(N, \mathbf{c'})$ can only decrease and therefore $i$ will be selected to the same extent or less. When $i$ increases their declared cost to $c' > \min\{ B, \tau_{i}\}$, $i$ definitely loses. Therefore, the allocation rule is monotone. 

Note that a winning agent $i$ definitely loses when declaring a cost $c' > B$ and as $x_i \in [0, 1]$, we have $ \int_{0}^{\infty} x_{i}(u,\mathbf{c}_{-i}) du \le 1 \cdot B < \infty$. Therefore, the mechanism is truthful and individually rational by Theorem \ref{Tardos:Payments}. \hfill \qed 
\end{proof}

\begin{lemma}\label{BudFea}
Mechanism \textsc{\textbf{DA}} is budget feasible.
\end{lemma}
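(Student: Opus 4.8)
The plan is to bound the total payment $\sum_i p_i$ separately in the two branches of Mechanism \textsc{\textbf{DA}}, invoking the payment formula (\ref{payments}) together with the monotonicity already established in Lemma \ref{TrInRa}. Since losing agents receive $p_i = 0$ under (\ref{payments}), only winners contribute. In the first branch ($\rho_{i^*}(N,\mathbf{c}) \ge \beta$) only $i^*$ wins, with $x_{i^*}=1$. Here I would argue that, as $i^*$ raises its reported cost, $\rho_{i^*}$ is unchanged while every other ratio only grows, so the mechanism stays in this branch and $i^*$ stays fully selected until it either leaves $N$ (at reported cost $B$) or is overtaken in the $\rho$-order; in both cases its threshold bid is at most $B$. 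By (\ref{payments}), the payment of an agent whose allocation is the step function $1\!\to\!0$ equals exactly that threshold bid, so $p_{i^*} \le B$ and budget feasibility is immediate.

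The second branch is the crux. For a surviving winner $i$, the threat guarantees $c_i \le \tau_i$, and since $\tau_i$ does not depend on $i$'s report, the threat forces $x_i(u,\mathbf{c}_{-i}) = 0$ for every $u > \tau_i$; monotonicity additionally gives $x_i(u,\mathbf{c}_{-i}) \le x_i$ for $u \ge c_i$. Substituting these two facts into (\ref{payments}) yields the clean bound $p_i \le c_i x_i + (\tau_i - c_i)x_i = \tau_i x_i$, whence
\[
\sum_i p_i \;\le\; \sum_i \tau_i x_i \;=\; \frac{B}{\alpha(1+\beta)}\sum_i \frac{v_i x_i}{\opt_{-i}(N,\mathbf{c})} .
\]
It then suffices to show $\sum_i \frac{v_i x_i}{\opt_{-i}(N,\mathbf{c})} \le \alpha(1+\beta)$, after which the prefactors cancel and $\sum_i p_i \le B$.

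Establishing this last inequality is the step I expect to be the main obstacle, and it is where the design of the threat really pays off. I would combine \emph{two} lower bounds on $\opt_{-i}(N,\mathbf{c})$: the generic one $\opt_{-i}(N,\mathbf{c}) \ge \opt(N,\mathbf{c}) - v_i$ (deleting $i$ loses at most $v_i$ of value), and the branch condition $\rho_i(N,\mathbf{c}) < \beta$, i.e. $\opt_{-i}(N,\mathbf{c}) > v_i/\beta$. Together they give $\frac{v_i}{\opt_{-i}(N,\mathbf{c})} \le \min\!\bigl(\frac{v_i}{\opt(N,\mathbf{c})-v_i},\,\beta\bigr)$, and the key observation is that this minimum is dominated termwise by $(1+\beta)\frac{v_i}{\opt(N,\mathbf{c})}$: after normalizing $\opt(N,\mathbf{c})=1$, the inequality $\min(\frac{v}{1-v},\beta) \le (1+\beta)v$ holds for all $v\in(0,1)$, being tight exactly at the crossover $v = \beta/(1+\beta)$ (the $\frac{v}{1-v}$ regime controls small agents, the constant $\beta$ regime the large ones). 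Summing over winners and using $\sum_i v_i x_i \le \alpha\,\opt(N,\mathbf{c})$ — the value set before the threat step, which deselection can only decrease — then gives $\sum_i \frac{v_i x_i}{\opt_{-i}(N,\mathbf{c})} \le (1+\beta)\frac{\sum_i v_i x_i}{\opt(N,\mathbf{c})} \le \alpha(1+\beta)$, closing the argument. The delicate point is that neither bound alone works: the naive $\opt_{-i} \ge \opt - v_i$ only yields $\frac{\alpha}{1-\beta} > \alpha(1+\beta)$, so it is precisely the balancing of the two regimes in the termwise $\min$-inequality that makes budget feasibility hold with equality.
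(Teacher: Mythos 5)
Your proof is correct and follows essentially the same route as the paper: bound each winner's payment by $x_i\tau_i$ via the threat and monotonicity, then use $\opt_{-i}(N,\mathbf{c}) \ge \opt(N,\mathbf{c})/(1+\beta)$ together with $\sum_i v_i x_i \le \alpha\,\opt(N,\mathbf{c})$. Your termwise $\min\bigl(\tfrac{v}{1-v},\beta\bigr)\le(1+\beta)v$ argument is just a repackaging of the paper's one-line chain $\opt \le \opt_{-i}+v_i < (1+\beta)\opt_{-i}$, which combines the same two facts ($\opt_{-i}\ge\opt-v_i$ and $v_i<\beta\,\opt_{-i}$).
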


\begin{proof}
Suppose, given declared costs $\mathbf{c} = (c_i)_{i \in A}$, $\mathbf{x}$ is computed by selecting agent $i^*$. We have $\sum_{i \in N} p_i = p_{i^*} \le 1 \cdot B$, as $x_{i^*} \in [0, 1]$ and $i^*$ definitely loses when declaring a cost $c' > B$.

Otherwise, the mechanism computed $\mathbf{x}$ by selecting $k$ agents such that $v(\mathbf{x}) \le \alpha \opt(N, \mathbf{c})$. By construction we know that the threshold bid for agent $i$ is smaller than or equal to our threat $\tau_i$. Therefore, and because the allocation rule is monotone, the payment of agent $i$ can be bounded by $x_i \tau_i$ and
\[\sum_{i \in N} p_i = \sum_{i=1}^{k} p_{i} \le \sum_{i=1}^{k} x_i v_i \frac{B}{\alpha(1+\beta) \opt_{-i}(N, \mathbf{c})} \le \sum_{i=1}^{k} x_i v_i \frac{B}{\alpha \opt(N, \mathbf{c})} \le B,\]
were the second inequality follows from $\opt(N, \mathbf{c}) \le \opt_{-i}(N, \mathbf{c}) + v_i  < (1 + \beta)\opt_{-i}(N, \mathbf{c})$, as the mechanism did not select agent $i^*$. Hence, the mechanism is budget feasible. \hfill \qed 
\end{proof}

In order to prove the approximation guarantee, we need the following lemma. 

\begin{lemma}\label{LemLowBouB}
Let $\mathbf{x^*}$ be a solution of $\opt(A, \mathbf{c})$ and assume agents $i\in A$, $n = |A|$, are ordered such that $\frac{v_1}{c_1} \ge \dots \ge \frac{v_n}{c_n}$. Let $k\le n$ be an integer such that $\sum_{i=1}^{k-1} v_i x_{i}^* < \alpha \opt(A, \mathbf{c}) \le \sum_{i=1}^{k} v_i x_{i}^*$, $\alpha \in (0,1]$. Then $\frac{c_{k}}{v_{k}} (1 - \alpha) \opt(A, \mathbf{c}) \le B$.
\end{lemma}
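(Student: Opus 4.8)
The plan is to rewrite the quantity $\frac{c_k}{v_k}(1-\alpha)\opt(A,\mathbf{c})$ by expressing $(1-\alpha)\opt(A,\mathbf{c})$ as a \emph{tail} of the optimal solution's value, and then to bound this tail against the budget using the efficiency ordering. The only two facts I need about $\mathbf{x^*}$ are that it attains the fractional optimum, so $\opt(A,\mathbf{c}) = \sum_{i=1}^{n} v_i x_i^*$, and that it respects the budget, i.e.\ $\sum_{i=1}^{n} c_i x_i^* \le B$; its precise (greedy) structure plays no role.

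First I would use the left defining inequality $\sum_{i=1}^{k-1} v_i x_i^* < \alpha \opt(A,\mathbf{c})$. Subtracting both sides from $\opt(A,\mathbf{c}) = \sum_{i=1}^{n} v_i x_i^*$ turns it into an upper bound on the tail, $(1-\alpha)\opt(A,\mathbf{c}) < \sum_{i=k}^{n} v_i x_i^*$. Next I would invoke the efficiency ordering: for every $i \ge k$ we have $\frac{v_i}{c_i} \le \frac{v_k}{c_k}$, equivalently $\frac{c_k}{v_k} v_i \le c_i$. This is the step that converts value-weights $v_i$ into cost-weights $c_i$ at the expense of the factor $\frac{c_k}{v_k}$. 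Before using it I must check that $\frac{c_k}{v_k}$ is well defined, i.e.\ $v_k > 0$: if $v_k = 0$ then $\sum_{i=1}^{k} v_i x_i^* = \sum_{i=1}^{k-1} v_i x_i^*$, contradicting the right defining inequality $\alpha\opt(A,\mathbf{c}) \le \sum_{i=1}^{k} v_i x_i^*$. Multiplying $\frac{c_k}{v_k} v_i \le c_i$ by $x_i^* \ge 0$ and summing over $i = k, \dots, n$ then gives $\frac{c_k}{v_k}\sum_{i=k}^{n} v_i x_i^* \le \sum_{i=k}^{n} c_i x_i^*$.

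Chaining these estimates together yields
\[ \frac{c_k}{v_k}(1-\alpha)\opt(A,\mathbf{c}) \le \frac{c_k}{v_k}\sum_{i=k}^{n} v_i x_i^* \le \sum_{i=k}^{n} c_i x_i^* \le \sum_{i=1}^{n} c_i x_i^* \le B, \]
where the final inequality is just budget feasibility of $\mathbf{x^*}$. The argument is short, so I do not expect a genuine obstacle: the main care is in orienting the inequalities correctly (the left defining inequality for $k$ must become an \emph{upper} bound on $\sum_{i\ge k} v_i x_i^*$, while the right one is used only to rule out $v_k = 0$) and in recognizing that the efficiency ordering is precisely the tool that replaces $v_i$ by $c_i$ inside the tail sum.
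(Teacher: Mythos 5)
Your proof is correct and follows essentially the same route as the paper's: both arguments bound $\frac{c_k}{v_k}$ times the tail value of $\mathbf{x^*}$ from index $k$ onward by the tail cost via the efficiency ordering, and then invoke budget feasibility of $\mathbf{x^*}$. The only (immaterial) difference is that the paper splits $x_k^*$ so that the tail $\mathbf{y}$ has value exactly $(1-\alpha)\opt(A,\mathbf{c})$, whereas you work with the full tail $\sum_{i=k}^{n} v_i x_i^*$ and use the strict inequality $(1-\alpha)\opt(A,\mathbf{c}) < \sum_{i=k}^{n} v_i x_i^*$; your explicit check that $v_k>0$ is a welcome extra.
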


\begin{proof}
For convenience let $\opt = \opt(A, \mathbf{c})$. Note that we can split $\mathbf{x}^*$ into $\mathbf{x} = (x^{*}_{1}, \dots, x^{*}_{k-1}, x_{k}, 0, \dots, 0)$ and $\mathbf{y} = (0, \dots, 0, x^{*}_{k} - x_{k}, x^{*}_{k+1}, \dots, x^{*}_{n})$ with $x_{k} \le x_{k}^*$ such that $\mathbf{x}^* = \mathbf{x} + \mathbf{y}$, $v(\mathbf{x}) = \alpha \opt$ and $v(\mathbf{y}) = (1 - \alpha) \opt$. By feasibility of $\mathbf{x}^*$, and thus $\mathbf{y}$, and the ordering of the agents it follows that
\[ B \ge \sum_{i=k}^{n} c_i y_{i} = \sum_{i=k}^{n} \frac{c_i}{v_i} v_i y_{i} \ge \frac{c_{k}}{v_{k}} \sum_{i=k}^{n} v_i y_{i} = \frac{c_{k}}{v_{k}} (1 - \alpha) \opt.
\] 

\vspace*{-4ex}
\hfill \qed 
\end{proof}

Lemma \ref{LemLowBouB} can be interpreted in the following way. If agents $k$ up to some $j \in A$, $j\ge k$, together contribute a fraction of $(1-\alpha)$ of the value of the optimal solution, then `paying' these agents a cost per unit value of $\frac{c_k}{v_k}$ is budget feasible, as these agents actually have a greater or an equal cost per unit value. 

\begin{lemma}\label{AppGua}
Mechanism \textsc{\textbf{DA}} has an approximation guarantee of $\frac{\sqrt{5}+1}{\sqrt{5}-1}$ if $\alpha = \frac{\sqrt{5} -1}{\sqrt{5} +1}$ and $\beta = \frac{1}{2}\sqrt{5} - \frac{1}{2}$. 
\end{lemma}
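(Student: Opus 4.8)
The plan is to bound the ratio $\opt/v(\mathbf{x})$ separately in the two branches of mechanism \textsc{DA} and to show that, for the stated parameters, both branches yield exactly the guarantee $\frac{\sqrt5+1}{\sqrt5-1}$. Throughout I analyze a truthful report, so the declared costs coincide with the true costs; since these are bounded by the budget, $c_i\le B$ for all $i$, hence $N=A$, and I abbreviate $\opt=\opt(N,\mathbf{c})$. Before starting I would record the two algebraic facts that drive everything: with $\alpha=\frac{\sqrt5-1}{\sqrt5+1}$ and $\beta=\frac12\sqrt5-\frac12$ one has $\frac1\alpha=\frac{1+\beta}{\beta}=\frac{\sqrt5+1}{\sqrt5-1}$, and, crucially, the tight identity $\alpha(1+\beta)=1-\alpha$.

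First I would handle the branch in which the mechanism selects the single agent $i^*$ with $\rho_{i^*}(N,\mathbf{c})\ge\beta$. Here $v(\mathbf{x})=v_{i^*}$, and since deleting $i^*$ from any feasible solution costs at most $v_{i^*}$ in value, $\opt\le\opt_{-i^*}(N,\mathbf{c})+v_{i^*}$. The defining inequality $\rho_{i^*}\ge\beta$ rewrites as $\opt_{-i^*}(N,\mathbf{c})\le v_{i^*}/\beta$, so $\opt\le v_{i^*}\bigl(1+\tfrac1\beta\bigr)=v_{i^*}\frac{1+\beta}{\beta}$, giving a ratio of at most $\frac{1+\beta}{\beta}$ in this branch.

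The substance lies in the greedy branch, where $\rho_{i^*}<\beta$. Before the threat loop the allocation has value $\alpha\opt$, so it suffices to prove that the loop deselects \emph{no} winning agent; then $v(\mathbf{x})=\alpha\opt$ and the ratio equals $1/\alpha$. To this end I would observe that, under consistent tie-breaking, the mechanism's prefix $1,\dots,k$ is a prefix of the optimal (greedy) fractional solution, so its marginal agent $k$ is exactly the index of Lemma~\ref{LemLowBouB}, whence $\frac{c_k}{v_k}\le\frac{B}{(1-\alpha)\opt}$. For every $i\le k$ the efficiency ordering gives $\frac{c_i}{v_i}\le\frac{c_k}{v_k}$, while $\opt_{-i}(N,\mathbf{c})\le\opt$ gives $\tau_i\ge v_i\frac{B}{\alpha(1+\beta)\opt}$. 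Using the identity $\alpha(1+\beta)=1-\alpha$, these combine to
\[ c_i = v_i\,\frac{c_i}{v_i} \le v_i\,\frac{c_k}{v_k} \le v_i\,\frac{B}{(1-\alpha)\opt} = v_i\,\frac{B}{\alpha(1+\beta)\opt} \le \tau_i, \]
so agent $i$ survives the loop. This is the main obstacle: it is precisely the point where the threat, indispensable for budget feasibility in Lemma~\ref{BudFea}, must be shown to do no harm to the value, and it goes through only because the parameters make the no-deselection constraint tight.

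Finally I would combine the branches: each yields a ratio of $\frac{1+\beta}{\beta}=\frac1\alpha=\frac{\sqrt5+1}{\sqrt5-1}$, the claimed guarantee. I would also note that this choice of $\alpha,\beta$ is forced by balancing the two branch bounds, $\frac{1+\beta}{\beta}=\frac1\alpha$, against the tight constraint $\alpha(1+\beta)=1-\alpha$ (equivalently $\beta=1-\alpha$); eliminating $\beta$ gives $\alpha^2-3\alpha+1=0$, whose root in $(0,1]$ is $\alpha=\frac{\sqrt5-1}{\sqrt5+1}$ with $\beta=\frac12\sqrt5-\frac12$, which both explains and confirms the stated values.
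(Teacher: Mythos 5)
Your proof is correct and follows essentially the same route as the paper's: the same two-branch case analysis, the same use of Lemma~\ref{LemLowBouB} together with the efficiency ordering to show the threat deselects no one, and the same balancing of $\frac{1+\beta}{\beta}=\frac{1}{\alpha}$ subject to $\alpha(1+\beta)\le 1-\alpha$. The only difference is presentational — you make explicit that the constraint holds with equality for the stated parameters and verify the quadratic $\alpha^2-3\alpha+1=0$, which the paper leaves implicit.
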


\begin{proof}
Suppose given declared costs $\mathbf{c} = (c_i)_{i \in A}$, $\mathbf{x}$ is computed by the mechanism. If the mechanism selected agent $i^*$, we have
\[ v_{i^*} \ge \beta \opt_{-i^*}(N, \mathbf{c}) \ge \beta \opt(N, \mathbf{c}) - \beta v_{i^*} \quad \Leftrightarrow \quad v(\mathbf{x}) \ge \frac{\beta}{1 + \beta} \opt(N, \mathbf{c}).\]

Otherwise the mechanism computed $\mathbf{x}$ by selecting $k$ agents such that initially $v(\mathbf{x}) = \alpha \opt(N, \mathbf{c})$. We want no agent $i \le k$ to be deselected by our threat, i.e., we want $c_i \le \tau_i$. If  
\[ R:= \frac{\alpha(1+\beta)}{(1 - \alpha)} \le 1 \quad \text{ then } \quad \frac{\alpha(1+\beta)}{(1 - \alpha)}  \opt_{-i}(N, \mathbf{c}) \le \opt(N, \mathbf{c}),\]
and rearranging terms leads to 
\[ \tau_i = v_i \frac{B}{\alpha(1+\beta) \opt_{-i}(N, \mathbf{c})} \ge  v_i \frac{B}{(1 - \alpha) \opt(N, \mathbf{c})} \ge c_i, \]
where the last inequality follows from Lemma \ref{LemLowBouB} and the ordering of the agents. To see that the condition of Lemma \ref{LemLowBouB} is satisfied, note that there exists an optimal solution of the form $(1, \dots, 1, a \in [0,1], 0, \dots, 0)$ and for equal values of $\alpha$ the integer $k$ of the mechanism and Lemma \ref{LemLowBouB} then correspond.  

Balancing the approximation guarantees $(1 + \beta)/\beta = 1/\alpha$ subject to $R \le 1$ leads to $\alpha = \frac{\sqrt{5} -1}{\sqrt{5} +1} \approx 0.38$, $\beta = \frac{1}{2}\sqrt{5} - \frac{1}{2}  \approx 0.61$ and an approximation guarantee of $\frac{\sqrt{5}+1}{\sqrt{5}-1} \approx 2.62$. \hfill \qed 
\end{proof}

The mechanism \textsc{\textbf{DA}} is computationally efficient: it is trivial to see that all steps take polynomial time, except for the computation of the payments. The latter can also be done efficiently for each agent $i$ because the payment function is piecewise and the number of subfunctions one needs to consider is bounded by $n$. Further details can be found in Appendix \ref{AppComEff}. From Lemmas \ref{TrInRa}, \ref{BudFea} and \ref{AppGua}, we arrive at the following theorem.

\begin{theorem}\label{Th:DA}
Mechanism \textsc{\textbf{DA}} with $\alpha = \frac{\sqrt{5} -1}{\sqrt{5} +1}$ and $\beta = \frac{1}{2}\sqrt{5} - \frac{1}{2}$ is truthful, individually rational, budget feasible, computationally efficient and has an approximation guarantee of $\frac{\sqrt{5}+1}{\sqrt{5}-1}$.
\end{theorem}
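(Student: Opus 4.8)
The plan is to treat Theorem~\ref{Th:DA} as an assembly of the properties already established for Mechanism \textsc{\textbf{DA}}, and to supply the one ingredient not yet proved in the main text, namely computational efficiency. First I would invoke Lemma~\ref{TrInRa} for truthfulness and individual rationality, Lemma~\ref{BudFea} for budget feasibility, and Lemma~\ref{AppGua} for the approximation guarantee. The only thing to verify here is that the chosen parameters $\alpha = \frac{\sqrt{5}-1}{\sqrt{5}+1}$ and $\beta = \frac{1}{2}\sqrt{5}-\frac{1}{2}$ are exactly the values produced by the balancing argument in Lemma~\ref{AppGua}: they satisfy $(1+\beta)/\beta = 1/\alpha$ and the feasibility condition $R = \frac{\alpha(1+\beta)}{1-\alpha} \le 1$ needed so that the threats never deselect a greedily chosen agent. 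With these values substituted, both branches of the case analysis yield the common guarantee $\frac{\sqrt{5}+1}{\sqrt{5}-1}$.

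For computational efficiency, I would argue that every step of the mechanism except the payment computation is manifestly polynomial: forming $N$ costs $O(n)$; each quantity $\opt_{-i}(N,\mathbf{c})$, and hence each $\rho_i$ and each threat $\tau_i$, is a fractional knapsack value obtainable by the greedy rule after sorting by efficiency, so computing all $n$ of them is polynomial; the greedy fill to value $\alpha\,\opt(N,\mathbf{c})$ and the subsequent threat cutoffs are linear once the agents are sorted. This leaves the overpayment integral as the sole nontrivial cost.

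The main obstacle, and the part I would spend the most care on, is evaluating $\int_{b_i}^{\infty} x_i(u,\mathbf{c}_{-i})\,du$ from~(\ref{payments}) in polynomial time, as outlined in the reference to Appendix~\ref{AppComEff}. The key observation is that, holding $\mathbf{c}_{-i}$ fixed, the fraction $x_i(u,\mathbf{c}_{-i})$ is a monotone non-increasing function of $u$ with only polynomially many breakpoints: as $u$ grows, agent $i$ slides backward past the other agents in the efficiency ordering (at most $n-1$ crossings), and in addition there are the threshold $\tau_i$ and the budget cutoff $B$, beyond which $x_i = 0$. Between consecutive breakpoints the allocation $x_i$ is given by a fixed closed-form expression in $u$ arising from the greedy fill, so the integral splits into $O(n)$ pieces, each integrated in closed form, and the total is finite because the $B$ cutoff forces the integrand to vanish for large $u$ (as already noted in the proof of Lemma~\ref{TrInRa}). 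The delicate point is to check that no extra breakpoints arise from the interplay between $i$'s position in the ordering and the value target $\alpha\,\opt(N,\mathbf{c})$, which I would handle by tracking how $\opt(N,\mathbf{c})$ and the relevant prefix sums change as $i$ moves. Once all five properties are in hand, the theorem follows by combining them. \hfill \qed
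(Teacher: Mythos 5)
Your proposal is correct and follows essentially the same route as the paper: the theorem is assembled from Lemmas~\ref{TrInRa}, \ref{BudFea} and \ref{AppGua} (with the stated $\alpha,\beta$ satisfying $(1+\beta)/\beta = 1/\alpha$ and $R=1\le 1$), and computational efficiency is handled exactly as in Appendix~\ref{AppComEff} by bounding the number of breakpoints of the piecewise payment function by $O(n)$ as agent $i$ slides back in the efficiency ordering. No gaps.
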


\paragraph{Lower Bound.}

Next we show that no truthful, individually rational, budget feasible and deterministic mechanism exists with an approximation guarantee of $(\frac{5}{4} - \epsilon_1)$, for some $\epsilon_1>0$. For contradiction, assume such a mechanism does exist. Consider two instances, both with budget $B$ and two agents with equal valuations: $\mathcal{I}_1 = (B, \mathbf{v} = \mathbf{1}, \mathbf{c} = (B,B))$ and $\mathcal{I}_2 = (B, \mathbf{v} = \mathbf{1}, \mathbf{c} = (\epsilon_2,B))$. In the first instance $\opt = 1$, so for the approximation guarantee to hold, an allocation vector must satisfy $v(\mathbf{x}) \ge 4/(5-4\epsilon_1)$. Therefore, any such mechanism must have some agent $i$ for which $x_i \ge 2/(5-4\epsilon_1)$ and assume w.l.o.g. that this is agent 1. In the second instance $\opt = 2 - \epsilon_{2}/B$, so for the approximation guarantee to hold, an allocation vector must satisfy $v(\mathbf{x}) \ge (2 - \epsilon_{2}/B)( 4/(5-4\epsilon_1))$. By the previous instance and individual rationality, it follows that agent 1 can guarantee itself a utility of at least $u = (B-\epsilon_{2}) 2/(5-4\epsilon_1)$ by deviating to $B$. As agent 1 must be somewhat selected to achieve the approximation guarantee, $p_1 \ge \epsilon_2 x_1 + u$ by truthfulness. Therefore in the best case, if agent 1 is entirely selected, this leads to a budget left of at most $B' = B - \epsilon_2 - u$. By spending this all on agent 2, this leads to an allocation vector with a value of at most $2 - \epsilon_{2}/B - u/B$. With elementary calculations, one can show that this value is smaller than $\opt / (\frac{5}{4} - \epsilon_1)$ if $\epsilon_{2} < (8B\epsilon_{1})/(4\epsilon_{1} +1)$, resulting in a contradiction.

\section{Competitive Markets}

It is common for lower bound proofs to use multiple instances in order to show that a mechanism cannot satisfy all properties in each instance. In our proof, and also in the lower bound proofs by Singer~\cite{singer} and Gravin et al.~\cite{gravin}, this leads to very specific instances. One instance has agents with equal efficiency while the other instance has agents for which the difference in efficiency goes to infinity. Both instances are plausible in, say, a mature market were the efficiencies of the agents are close, or a premature market were the efficiencies of the agents differ a lot. 

However, it is reasonable to assume that after some time the efficiencies of the agents are somewhat bounded, as agents will cease to exist if they cannot compete with the other agents in terms of efficiency. We therefore introduce a setting in which some bound on the efficiencies of the agents is known and seek a tighter approximation guarantee for this setting. We formalize this with the following definition.

\begin{definition}\label{defTheta}
A procurement auction instance $\mathcal{I} = (B, (v_i)_{i\in A},(c_i)_{i\in A})$ is $\theta$-competitive with $\theta \geq 1$ if
\begin{equation}
\max_{i \in A: c_i \le B} \frac{v_i}{c_i} \le \theta \min_{i \in A: c_i \le B} \frac{v_i}{c_i}.
\label{eqTheta}
\end{equation}
\end{definition}
In this setting, we will also say that the agents are \emph{$\theta$-competitive}. Note that if $\theta = 1$ then all agents are equally competitive. If $\theta \rightarrow \infty$ the competitiveness of the agents is unbounded, which corresponds to the original setting. 

\paragraph{The Mechanism.}

Under the assumption that an instance is $\theta$-competitive, an agent can only increase their declared cost up to some $c'$ before becoming the agent with worst efficiency. Again, we use the payment rule of Theorem \ref{Tardos:Payments}, and want the second term of (\ref{payments}) to be as small as possible. We give a mechanism that realizes this, not by directly imposing a threat, but by setting the parameters of the mechanism to specific values. This will ensure that if, in the second part of the mechanism, the declared cost of an agent exceeds some $c'$, this agent will lose.
Let $\alpha \in (0,1]$ and $\beta > 0$ be some parameters which we fix later. Our mechanism is as follows:

\begin{center}
\begin{boxedminipage}{11cm}
\textsc{\textbf{Divisible $\theta$-competitive Agents (DA-$\theta$)}}
\begin{algorithmic}[1]
\small
\State Let $N=\{i \in A \;|\; c_i \le B\}$, $n = |N|$ and $i^* = \arg\max_{i\in N} \rho_{i}(N,\mathbf{c}) $
\If{$\rho_{i^*}(N,\mathbf{c}) \ge \beta$} set $x_{i^*}=1$ and $x_{i}=0$ for $i \in N\setminus \{i^*\}$
\Else
\State Rename agents such that $\frac{v_1}{c_1} \ge \frac{v_2}{c_2} \ge \dots \ge \frac{v_n}{c_n}$
\State Compute $\mathbf{x}$ s.t. $v(\mathbf{x}) = \alpha \opt(N, \mathbf{c}) $ with $x_{i} = 1$ for $i<k$, $x_{k} \in (0,1]$ 

and $x_{i} = 0$ for $i>k$, $k \le n $
\EndIf
\State For $i \in N$ compute payments $p_i$ according to (\ref{payments})
\State \textbf{return} $(\mathbf{x}, \mathbf{p})$
\end{algorithmic}
\end{boxedminipage}
\end{center}

\begin{lemma}\label{TrInRa2}
Mechanism \textsc{\textbf{DA-$\theta$}} is truthful and individually rational. 
\end{lemma}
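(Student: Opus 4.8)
The plan is to invoke Theorem~\ref{Tardos:Payments}: since \textsc{\textbf{DA-$\theta$}} computes its payments via~(\ref{payments}), it suffices to show that the allocation rule $x(\mathbf{c})$ is monotone non-increasing and that $\int_{0}^{\infty} x_{i}(u,\mathbf{c}_{-i})\,du < \infty$ for every $i$ and $\mathbf{c}_{-i}$. This mirrors the proof of Lemma~\ref{TrInRa}, and is in fact simpler, because \textsc{\textbf{DA-$\theta$}} omits the threat-based deselection step of \textsc{\textbf{DA}}; importantly, the $\theta$-competitiveness assumption is \emph{not} needed for this lemma, as the allocation rule is monotone on all inputs (it will only enter the later budget-feasibility and approximation analyses).

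For monotonicity I would fix an agent $i$, vary its declared cost while holding $\mathbf{c}_{-i}$ fixed, and split into the two branches exactly as in Lemma~\ref{TrInRa}. If the single-agent branch is taken (so $x_{i^*}=1$), then decreasing the cost of $i^*$ leaves $\rho_{i^*}(N,\mathbf{c})$ unchanged while each $\rho_j(N,\mathbf{c})$ ($j\neq i^*$) can only decrease, so $i^*$ stays fully selected; increasing its cost while staying in $N$ can only raise the other $\rho_j(N,\mathbf{c})$, so $i^*$ either remains the maximizer and stays fully selected or loses; and increasing past $B$ removes $i^*$ from $N$, so it loses. In the greedy branch, if a winning agent $i$ lowers its cost the instance stays in the greedy branch (the other $\rho_j$ do not increase), the ratio $v_i/c'$ increases so $i$ moves forward in the efficiency order, and $\opt(N,\mathbf{c}')$ does not decrease, whence $x_i$ can only increase; if $i$ raises its cost while remaining in $N$, then either the mechanism switches to the single-agent branch and $i$ loses, or $i$ moves backward in the order while $\opt(N,\mathbf{c}')$ does not increase, whence $x_i$ can only decrease; and raising the cost above $B$ again drops $i$ from $N$. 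Hence $x_i(c',\mathbf{c}_{-i})$ is non-increasing in $c'$.

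For the integrability condition I would observe that every winning agent lies in $N$, and an agent leaves $N$ as soon as its declared cost exceeds $B$; therefore $x_i(u,\mathbf{c}_{-i})=0$ for $u>B$, and since $x_i\in[0,1]$ we obtain $\int_{0}^{\infty} x_i(u,\mathbf{c}_{-i})\,du \le 1\cdot B < \infty$. Truthfulness and individual rationality then follow immediately from Theorem~\ref{Tardos:Payments}.

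The main point requiring more than a one-line justification is the greedy-branch claim that a higher declared cost yields a weakly smaller fraction $x_i$: one must verify that moving $i$ backward in the efficiency order, together with the (weak) decrease of the target value $\alpha\,\opt(N,\mathbf{c}')$ that the greedy fill aims for, cannot increase the extent to which $i$ is selected. This is the only delicate step, and it is handled exactly as in the corresponding argument of Lemma~\ref{TrInRa}, so I would simply adapt that reasoning here.
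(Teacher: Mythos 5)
Your proposal is correct and follows essentially the same route as the paper, which simply reuses the monotonicity argument of Lemma~\ref{TrInRa} (now without the threat step) and bounds $\int_0^\infty x_i(u,\mathbf{c}_{-i})\,du$ by $B$ since any agent declaring above $B$ leaves $N$. The only difference is cosmetic: the paper's version of the proof additionally records that a winning agent in the greedy branch loses once its declared cost exceeds $\min\{B,\theta c_i\}$, but that observation is only needed later for budget feasibility, so your remark that $\theta$-competitiveness plays no role in this particular lemma is accurate.
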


The proof of Lemma \ref{TrInRa2} is identical to the proof of Lemma \ref{TrInRa} with one exception: In the second part of the mechanism, a winning agent $i$ will definitely loose if $i$ would have declared a cost $c' > \min\{ B, \theta c_i \}$, were the reasoning of the second argument is stated in the proof of Lemma \ref{BudFea2}. In order to prove budget feasibility, we will need the following lemma which states that if some $k$ most efficient agents together contribute a fraction of $\alpha$ of the value of the optimal solution, then the corresponding total cost of these agents cannot exceed a fraction of $\alpha$ of the budget.

\begin{lemma}\label{boundOptCost}
Let $\mathbf{x^*}$ be a solution of $\opt(A, \mathbf{c})$ and assume agents $i \in A$, $n = |A|$, are ordered such that $\frac{v_1}{c_1} \ge \dots \ge \frac{v_n}{c_n}$. Let $v(\mathbf{x}) = \alpha \opt(A, \mathbf{c})$ with $\alpha \in (0,1]$, integer $k \le n$, $x_i = x^{*}_{i}$ for $i< k$, $x_{k} \in (0,x^*_k]$, $x_i = 0$ for $i>k$. Then $ \sum_{i=1}^{k} c_i x_i \le \alpha B$.
\end{lemma}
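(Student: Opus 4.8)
The plan is to mirror the splitting strategy already used in the proof of Lemma \ref{LemLowBouB}. I would write $\mathbf{x^*} = \mathbf{x} + \mathbf{y}$, where $\mathbf{x}$ is exactly the prefix allocation in the statement and $\mathbf{y} = (0,\dots,0,\, x^*_k - x_k,\, x^*_{k+1},\dots,x^*_n)$ collects the remaining mass of the optimal solution. By construction $v(\mathbf{x}) = \alpha\,\opt(A,\mathbf{c})$, so $v(\mathbf{y}) = (1-\alpha)\,\opt(A,\mathbf{c})$; moreover, since $y_k = x^*_k - x_k$ is chosen so that index $k$ is not double counted, feasibility of $\mathbf{x^*}$ gives $\sum_{i=1}^k c_i x_i + \sum_{i=k}^n c_i y_i = \sum_{i=1}^n c_i x^*_i \le B$. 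The target is to show the first of these two cost sums is at most $\alpha B$.

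The heart of the argument is to pivot on the efficiency $v_k/c_k$ of the boundary agent $k$, using the ordering $\frac{v_1}{c_1}\ge\dots\ge\frac{v_n}{c_n}$. Every agent $i\le k$ satisfies $c_i/v_i \le c_k/v_k$ and every agent $i\ge k$ satisfies $c_i/v_i \ge c_k/v_k$, which forces the prefix and suffix costs in opposite directions around the same pivot quantity. Concretely, the two bounds I would establish are
\[ \sum_{i=1}^k c_i x_i = \sum_{i=1}^k \frac{c_i}{v_i}\, v_i x_i \le \frac{c_k}{v_k}\sum_{i=1}^k v_i x_i = \frac{c_k}{v_k}\,\alpha\,\opt(A,\mathbf{c}) \]
and, exactly as in Lemma \ref{LemLowBouB},
\[ \sum_{i=k}^n c_i y_i = \sum_{i=k}^n \frac{c_i}{v_i}\, v_i y_i \ge \frac{c_k}{v_k}\sum_{i=k}^n v_i y_i = \frac{c_k}{v_k}\,(1-\alpha)\,\opt(A,\mathbf{c}). \]

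Finally I would combine the two displays to eliminate the pivot. Multiplying the first bound by $(1-\alpha)$ and the second by $\alpha$ yields $(1-\alpha)\sum_{i=1}^k c_i x_i \le \alpha\sum_{i=k}^n c_i y_i$, so that $\sum_{i=1}^k c_i x_i \le \alpha\big(\sum_{i=1}^k c_i x_i + \sum_{i=k}^n c_i y_i\big) \le \alpha B$ by the feasibility bound from the first paragraph, which is the claim. The argument is thus a direct two-sided refinement of Lemma \ref{LemLowBouB}: the only genuinely delicate points, and the main thing I would check carefully, are the degenerate cases $v_k=0$ or $c_k=0$, where the pivot $c_k/v_k$ must be interpreted through the efficiency convention of the preliminaries (if $c_k=0$ then all prefix agents have zero cost and the bound is immediate, and the efficiency ordering ensures no agent with $v_i>0$ is pushed behind a zero-value agent).
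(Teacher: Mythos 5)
Your proof is correct and rests on exactly the same two pivot inequalities around $c_k/v_k$ and the same decomposition $\mathbf{x}^* = \mathbf{x} + \mathbf{y}$ as the paper's proof; the only difference is that you combine the two bounds directly with weights $(1-\alpha)$ and $\alpha$, whereas the paper assumes $\sum_{i=1}^k c_i x_i > \alpha B$ and derives the contradictory pair $B < \frac{c_k}{v_k}\opt$ and $B > \frac{c_k}{v_k}\opt$. Your direct combination is a purely presentational variant (arguably cleaner), and your attention to the degenerate pivot cases goes slightly beyond what the paper records.
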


\begin{proof}
For convenience let $\opt = \opt(A, \mathbf{c})$. Define $\mathbf{y} = \mathbf{x}^* - \mathbf{x}$, then $v(\mathbf{y}) = (1 - \alpha) \opt$. For contradiction, suppose $c(\mathbf{x}) = \sum_{i=1}^{k} c_i x_i > \alpha B$. Then it must be that $c(\mathbf{y}) < (1-\alpha) B$, as otherwise $c(\mathbf{x}^*) = c(\mathbf{x}) + c(\mathbf{y}) > B$. We have 
\[ \alpha B < \sum_{i=1}^{k} c_i x_i = \sum_{i=1}^{k} \frac{c_i}{v_i} v_i x_i \le \frac{c_{k}}{v_{k}} \sum_{i=1}^{k}  v_i x_i = \frac{c_{k}}{v_{k}} \alpha \opt \quad \Leftrightarrow \quad B < \frac{c_{k}}{v_{k}} \opt, \]
and
\[ (1- \alpha) B > \sum_{i=k}^{n} c_i y_i = \sum_{i=k}^{n} \frac{c_i}{v_i} v_i y_i \ge \frac{c_{k}}{v_{k}} \sum_{i=k}^{n} v_i y_i = \frac{c_{k}}{v_{k}} (1 - \alpha) \opt \quad \Leftrightarrow \quad B > \frac{c_{k}}{v_{k}} \opt, \]
resulting in a contradiction. \hfill \qed 
\end{proof}

\begin{lemma}\label{BudFea2}
Mechanism \textsc{\textbf{DA-$\theta$}} is budget feasible if $\alpha \le \min \big \{ \frac{1}{\theta}, \frac{1}{1+ \beta} \big \}$.
\end{lemma}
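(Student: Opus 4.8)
The plan is to bound the total payment $\sum_{i \in N} p_i$ separately in the two branches of the mechanism, in both cases relying on a standard consequence of the payment rule (\ref{payments}) together with monotonicity. Writing $\bar c_i$ for the \emph{threshold bid} of a winning agent $i$ (the largest cost at which $i$ still wins), monotonicity gives $x_i(u,\mathbf{c}_{-i}) \le x_i$ for $u \ge c_i$ and $x_i(u,\mathbf{c}_{-i}) = 0$ for $u > \bar c_i$, so the overpayment integral in (\ref{payments}) is at most $x_i(\bar c_i - c_i)$ and hence $p_i \le x_i \bar c_i$. It therefore suffices to bound the threshold bids in each branch and then sum.

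First I would dispose of the branch where the mechanism selects the single agent $i^*$. Here $x_{i^*}=1$ and all other agents pay $0$. Since $i^*$ leaves the set $N$, and thus loses, as soon as it declares a cost exceeding $B$, its threshold bid is at most $B$, so $p_{i^*} \le x_{i^*}\cdot B = B$ and budget feasibility holds.

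The main work is the else branch, where I would show that every winning agent $i$ has threshold bid at most $\theta c_i$. Suppose $i$ raises its declared cost to some $c' > \theta c_i$ with $c' \le B$ (if $c' > B$ then $i$ leaves $N$ and loses). If this switch causes the mechanism to jump to the first branch, then $i$ loses, because $\rho_i$ does not depend on $i$'s own cost and so stays below $\beta$. Otherwise the mechanism remains in the else branch, and by $\theta$-competitiveness in its pairwise form $v_i/c_i \le \theta\, v_j/c_j$ the new efficiency satisfies $v_i/c' < \tfrac1\theta\, v_i/c_i \le v_j/c_j$ for every $j \ne i$, so $i$ becomes the strictly least efficient agent and is processed last. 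Combining the else-branch condition $v_i < \beta\,\opt_{-i}(N,\mathbf{c})$ with $\opt(N,\mathbf{c'}) \le \opt_{-i}(N,\mathbf{c}) + v_i < (1+\beta)\,\opt_{-i}(N,\mathbf{c})$ and the hypothesis $\alpha \le \tfrac{1}{1+\beta}$ yields $\alpha\,\opt(N,\mathbf{c'}) < \opt_{-i}(N,\mathbf{c}) \le \sum_{j\ne i} v_j$, so the greedy reaches its target value using only agents other than $i$; hence $x_i = 0$ and $i$ loses. This establishes $\bar c_i \le \theta c_i$.

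It then remains to sum and apply Lemma \ref{boundOptCost} to the greedy allocation: $\sum_{i} p_i \le \sum_i x_i \bar c_i \le \theta \sum_i x_i c_i \le \theta\,\alpha B \le B$, where the last step uses $\alpha \le 1/\theta$. I expect the threshold-bid bound $\bar c_i \le \theta c_i$ to be the main obstacle: $\theta$-competitiveness alone makes an over-declaring agent the least efficient but does not by itself force it out of the greedy selection, and the crux is to pair this with the else-branch condition and $\alpha \le 1/(1+\beta)$ to guarantee that the remaining agents already meet the target $\alpha\,\opt$. The two constraints on $\alpha$ enter precisely here and in the final summation, which is exactly why both appear in the hypothesis.
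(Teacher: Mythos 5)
Your proposal is correct and follows essentially the same route as the paper's proof: dispose of the $i^*$ branch via the trivial bound $p_{i^*}\le B$, show in the greedy branch that a winning agent declaring $c'>\theta c_i$ becomes least efficient and is not needed to reach the target $\alpha\,\opt(N,\mathbf{c'})$ because $\alpha\,\opt(N,\mathbf{c'})\le \opt_{-i}(N,\mathbf{c})\le\sum_{j\ne i}v_j$ (using $v_i\le\beta\,\opt_{-i}$ and $\alpha\le 1/(1+\beta)$), and then conclude $\sum_i p_i\le\theta\sum_i c_i x_i\le\theta\alpha B\le B$ via Lemma~\ref{boundOptCost} and $\alpha\le 1/\theta$. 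Your write-up even makes explicit two points the paper leaves implicit, namely why $\theta$-competitiveness forces the deviating agent to the back of the ordering and why that agent cannot win in the first branch.
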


\begin{proof}
If, given declared costs $\mathbf{c} = (c_i)_{i \in A}$, $\mathbf{x}$ is computed by selecting agent $i^*$, the proof is identical to Lemma \ref{BudFea}. Otherwise the mechanism computed $\mathbf{x}$ such that $v(\mathbf{x}) = \alpha \opt(N, \mathbf{c})$. Suppose a winning agent $i$ increases their declared cost to $c' > \theta c_i$ and let $\mathbf{c'} = (c', \mathbf{c}_{-i})$. If $c'> B$, agent $i$ loses. If this is not the case, we proof that $i$ loses if $\alpha \le \frac{1}{1+ \beta}$. As the allocation vector was initially computed in the second part of the mechanism, we have $v_i \le \beta \opt_{-i}(N, \mathbf{c})$ leading to
\[ \opt(N, \mathbf{c'}) \le \opt(N, \mathbf{c}) \le \opt_{-i}(N, \mathbf{c}) + v_i  \le (1 + \beta) \opt_{-i}(N, \mathbf{c}) \quad \Leftrightarrow\]
\[ \alpha \opt(N, \mathbf{c'}) \le \frac{1}{1+ \beta} \opt(N, \mathbf{c'}) \le \opt_{-i}(N, \mathbf{c}) \le \sum_{j \in N, j \ne i} v_j.\]
The above inequality points out that the total valuation of agents $j \ne i$ is greater than or equal to $\alpha \opt(N, \mathbf{c'})$. Therefore, as agent $i$ will be last in the ordering when increasing their cost to $c'$, $i$ loses if the allocation vector is computed in the second part of the mechanism. If the allocation vector is computed by selecting $i^*$, agent $i$ also loses. Therefore, and by monotonicity of the allocation rule, we can bound the sum of the payments with
\[\sum_{i \in N} p_i = \sum_{i=1}^{k} p_{i} \le \sum_{i=1}^{k} \theta c_i x_i \le \theta \alpha B,\]
where the last inequality follows from Lemma \ref{boundOptCost}. To see this, note that an optimal solution of the form $(1, \dots, 1, a \in [0,1], 0, \dots, 0)$ exists and for equal values of $\alpha$ the integer $k$ and allocation vector $\mathbf{x}$ of the mechanism and Lemma \ref{boundOptCost} then correspond. So if $\alpha \le \frac{1}{\theta}$ and $\alpha \le \frac{1}{1+ \beta}$, the payments are budget feasible. \hfill \qed 
\end{proof}

The mechanism \textsc{\textbf{DA-$\theta$}} is computationally efficient and details can be found in Appendix \ref{AppComEff}. The approximation guarantee is equal to $ \max \big \{\frac{\beta +1 }{\beta},  \frac{1}{\alpha} \big\}$, where the second argument is by construction and the first argument follows from exactly the same reasoning as in Lemma \ref{AppGua}. Minimizing this max-expression subject to $\alpha \le \min \big \{ \frac{1}{\theta}, \frac{1}{1+ \beta} \big \}$ (budget feasibility) leads to $\alpha = \frac{1}{2}$ and $\beta = 1$ if $\theta \in [1, 2]$. Therefore, the following theorem follows from Lemmas \ref{TrInRa2} and \ref{BudFea2}. 

\begin{theorem} 
If $\theta \in [1, 2]$, mechanism \textsc{\textbf{DA-$\theta$}} with $\alpha = \frac{1}{2}$ and $\beta = 1$ is truthful, individually rational, budget feasible, computationally efficient and has an approximation guarantee of 2.
\end{theorem}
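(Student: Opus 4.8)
The plan is to obtain the theorem as a corollary of the machinery already in place: truthfulness, individual rationality and budget feasibility come from Lemmas~\ref{TrInRa2} and~\ref{BudFea2}, computational efficiency is deferred to the appendix, and the only genuine content that remains is (i) checking that the stated parameters meet the hypotheses of those lemmas and (ii) carrying out the optimization that singles out $\alpha = \frac{1}{2}$ and $\beta = 1$. First I would note that Lemma~\ref{TrInRa2} establishes truthfulness and individual rationality for \emph{any} admissible $\alpha \in (0,1]$ and $\beta > 0$, and in particular for $\alpha = \frac{1}{2}$, $\beta = 1$; these two properties therefore require no further argument.

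For budget feasibility I would invoke Lemma~\ref{BudFea2}, whose hypothesis is $\alpha \le \min\{1/\theta,\, 1/(1+\beta)\}$. Substituting $\beta = 1$ gives $\min\{1/\theta,\, 1/2\}$, and since $\theta \in [1,2]$ forces $1/\theta \ge 1/2$, this minimum equals $1/2$; hence $\alpha = \frac{1}{2} \le \frac{1}{2}$ holds with equality and the mechanism is budget feasible. This is precisely where the bound $\theta \le 2$ enters, and it explains why the guarantee of $2$ is claimed only on $[1,2]$.

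Next I would justify that the approximation guarantee equals $\max\{(\beta+1)/\beta,\, 1/\alpha\}$, as asserted in the text following Lemma~\ref{BudFea2}. In the $i^*$-branch the reasoning of Lemma~\ref{AppGua} carries over verbatim: choosing the most valuable agent yields $v(\mathbf{x}) \ge \frac{\beta}{1+\beta}\,\opt(N,\mathbf{c})$, a $(\beta+1)/\beta$ approximation. In the else-branch the allocation is built so that $v(\mathbf{x}) = \alpha\,\opt(N,\mathbf{c})$ exactly, and—since \textsc{DA-$\theta$} contains no deselection step—this value is not eroded afterwards, giving a $1/\alpha$ approximation; here $\opt(N,\mathbf{c})$ coincides with $\opt$ under the standing assumption $c_i \le B$. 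Evaluating $\max\{(\beta+1)/\beta,\, 1/\alpha\}$ at $\alpha = \frac{1}{2}$, $\beta = 1$ gives $\max\{2,2\} = 2$.

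The main, and essentially only, obstacle is the optimization: minimize $\max\{(\beta+1)/\beta,\, 1/\alpha\}$ subject to $\alpha \le \min\{1/\theta,\, 1/(1+\beta)\}$. Both arguments of the max are decreasing in their respective parameters, so I would push $\alpha$ to its upper bound and split on which constraint binds. If $1/(1+\beta) \le 1/\theta$, then $\alpha = 1/(1+\beta)$ and the objective becomes $\max\{1+\beta,\, 1+1/\beta\}$, whose minimum over $\beta$ is attained at the crossover $\beta = 1$, yielding value $2$ and $\alpha = \frac{1}{2}$; this point is admissible exactly because $\theta \le 2$. Otherwise $\alpha = 1/\theta$ with $\beta \le \theta - 1 \le 1$, and the best such choice $\beta = \theta - 1$ gives objective $1 + 1/(\theta - 1) \ge 2$, which is no improvement for $\theta \in [1,2]$. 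Hence $(\alpha,\beta) = (\frac{1}{2},1)$ is optimal with guarantee $2$, and combining this with the properties verified above completes the proof.
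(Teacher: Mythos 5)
Your proposal is correct and follows essentially the same route as the paper: truthfulness and individual rationality from Lemma~\ref{TrInRa2}, budget feasibility from Lemma~\ref{BudFea2} via the check $\alpha=\tfrac12\le\min\{1/\theta,1/(1+\beta)\}$ for $\theta\le 2$, the guarantee $\max\{(\beta+1)/\beta,\,1/\alpha\}$ with the second term by construction (no deselection step) and the first by the argument of Lemma~\ref{AppGua}, and then the constrained minimization yielding $(\alpha,\beta)=(\tfrac12,1)$. You merely spell out the optimization in more detail than the paper, which simply states its outcome.
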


It can be seen in the left plot of Figure \ref{fig:plots} that if $\theta > 2$, it is optimal to set $\alpha = \frac{1}{\theta}$ and $\beta$ to any number in $\big [\frac{\alpha}{1-\alpha}, \frac{1 - \alpha}{\alpha} \big ]$, as then $\alpha$ is the limiting factor in the approximation guarantee ($\frac{1}{\alpha} \ge (1 + \beta)/\beta$) and the value of $\alpha$ is feasible ($\alpha \le 1/(1 +\beta)$). So if it is known that agents are $\theta$-competitive with $\theta < 2.62$, it is beneficial to use mechanism \textsc{\textbf{DA-$\theta$}} instead of mechanism \textsc{\textbf{DA}}. 

\begin{figure}
\hspace*{-.1cm}
\tikzset{every path/.style={line width=1.pt}}
\begin{tikzpicture}[scale=0.7]
\begin{axis}[ axis lines = left, legend style={font=\small}, xlabel = \(\beta\), ylabel style ={rotate=-90}, ylabel = {\(\alpha\)}, legend style = {yshift=0.2cm} ]
\addplot [domain=0:2.4, samples=100, color=black] {x/(1+x)};
\addlegendentry{\(\frac{1}{\alpha} = \frac{1+ \beta}{\beta}\)};
\addplot [domain=0:2.4, samples=100,densely dotted, color=black ] {1/(1+x)};
\addlegendentry{\(\alpha = \frac{1}{1+\beta}\)};
\addplot [domain=0:2.4, samples=100,densely dashed, color=black] {(2/3)};
\addlegendentry{\(\alpha = \frac{1}{\theta} \)};
\addplot [domain=0:2.4, samples=100,densely dashed, color=black] {1/2};
\addplot [domain=0:2.4, samples=100,densely dashed, color=black] {1/3};
\draw[->, black](axis cs:0.2,2/3) -- (axis cs:0.2,0.626);
\draw[->, black] (axis cs:0.2,0.5) -- (axis cs:0.2, 0.46);
\draw[->, black] (axis cs:0.2,1/3) -- (axis cs:0.2,0.293);
\draw[->, black] (axis cs:0.6,1/1.6) -- (axis cs:0.6, 0.585);
\draw[->, black] (axis cs:1.3, 1/2.3) -- (axis cs:1.3, 0.394);
\end{axis}
\end{tikzpicture}
\hspace*{.5cm}
\begin{tikzpicture}[scale=0.7]
\begin{axis}[ axis lines = left, xlabel = \(\theta\), ylabel style = {rotate=-90, xshift=5pt}, ylabel = {\(\alpha\)}, legend style={yshift=-0.5cm}]
\addplot [domain=1:3, samples=100, densely dashed , color=black] {(5*(x^2)-1)/(4*(x^2))};
\addlegendentry{ \(\alpha = \frac{5 \theta^2 -1}{4 \theta^2}\)};
\addplot [domain=1:3, samples=100, color=black] {3 - (1/x) };
\addlegendentry{\( \alpha = 3 - \frac{1}{\theta}\)};
\end{axis}
\end{tikzpicture}
\caption{{\footnotesize Left: Plot of when the approximation guarantees are equal and parameter constraints for $\alpha$ (dashed and dotted). Right: Lower bound on the approximation guarantee for different values of $\theta$ for the divisible (dashed) and indivisible case.}} \label{fig:plots} 
\end{figure}
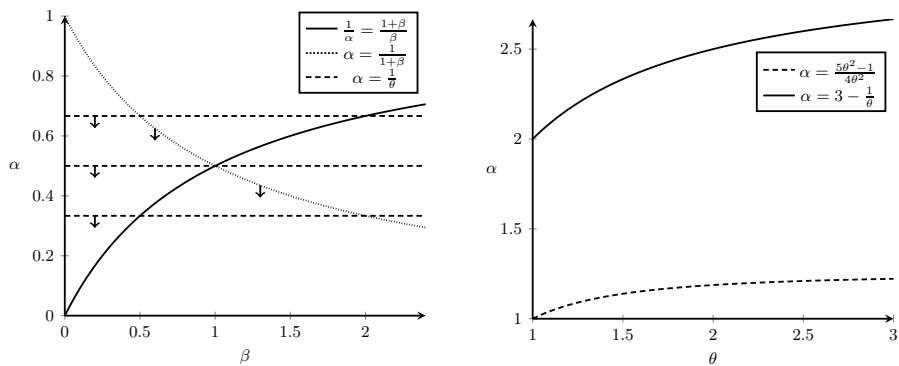

\paragraph{Lower Bound.}
Similar to Section 3, one can show that if $\theta \ge 2$ no deterministic, truthful, individually rational and budget feasible mechanism exists with an approximation guarantee of $(\frac{19}{16} - \epsilon)$, for some $\epsilon>0$. Additionally, this instance can be adjusted to a specific value of $\theta$ to prove that no mechanism exists with an approximation guarantee of $((5 \theta^2 -1)/(4 \theta^2) - \epsilon)$, for some $\epsilon>0$. Details can be found in Appendix \ref{sec:LowBouIndiTheta}. The general lower bound is plotted in Figure \ref{fig:plots} and, as in Section 3, converges to $\frac{5}{4}$ if $\theta \rightarrow \infty$.

For the indivisible case, one can show that if $\theta \ge 2$ no deterministic, truthful, individually rational and budget feasible mechanism exists with an approximation guarantee of $(\frac{5}{2} - \epsilon)$ compared to the fractional optimum, for some $\epsilon>0$. This lower bound can also be adjusted to a specific value of $\theta$ and, as in Gravin et al.~\cite{gravin}, converges to 3 if $\theta \rightarrow \infty$. Again, details can be found in Appendix \ref{sec:LowBouIndiTheta}.

\section{Agent Types with Capped Linear Valuation Functions}

In this section, we extend the original setting with two extra elements: namely agent types and a capped linear valuation function for each type. In the Introduction, we described how this relates to the example in which an auctioneer wants to organize a comedy show. The two extra elements of an instance $\mathcal{I} = (B, (v_i)_{i\in A},(c_i)_{i\in A}, (t_i)_{i\in A}, (l_j)_{j \in T} )$ of this type of procurement auction are defined as follows: Each agent $i \in A$ has a type $t_{i} \in T = \{1, 2, \dots, t \}$, with $t \le n$. If agents have the same type, they are substitutable, meaning that they are offering a similar good or service. For every type $j \in T$, the auctioneer has a valuation function $l_{j}: \mathbb{R}_{\ge 0} \rightarrow \mathbb{R}_{\ge 0}$, which maps the accumulative selected value of type $j$, to the actual value obtained by the auctioneer. We assume that there is a maximum total value of each type that the auctioneer wishes to buy, and the auctioneer will obtain no extra value if more value is selected. Additionally, selecting more than this maximum value will not have a negative impact. More formally for a type $j \in T$, $M_j \in \mathbb{R}^{+}$ represents this maximum value for type $j$, $l_j(x) = x$ for $x\le M_j$ and $l_j(x) = M_j$ for $x \ge M_j$.

In this section, $\opt(A,\mathbf{c})$ now corresponds to the value of the optimal solution of the linear program stated below and $\opt_{-i}(A,\mathbf{c})$ corresponds to the optimal value when regarding the set of agents $A \setminus \{i\}$.

\begin{equation}
\begin{array}[t]{rrcl@{\;\;\;}l}
\max & \sum_{j \in T} l_{j}(V_{j}) \\
\text{s.t.} & \sum_{i \in A} c_i x_i & \le & B & \\
& \sum_{i \in A: t_i = j } v_i x_i & = & V_j & \forall j \in T\\
& V_j & \le & M_j & \forall j \in T\\
& x_i & \in & [0,1] & \forall i \in A
\end{array}
\label{optProgramCapped}
\end{equation}

Note that if an optimal solution has a value of $\sum_{j \in T} M_t$ and some remaining budget, the inequality $V_j \le M_j$ ensures that this remaining budget is not spend on some agent that is not yet entirely selected. This has no influence on optimality, but will be useful when proving budget feasibility and ensures that Lemma \ref{boundOptCost} also holds in this setting. We define for each agent $i$:
\[ \rho_{i}(N,\mathbf{c}) = \frac{ \min \{ v_i, M_{t_i} \} }{\opt_{-i}(N, \mathbf{c})}.\]
Note that an agent $i$ still has no influence on their own ratio. Again we either select one valuable agent, or select agents in a greedy manner according to efficiencies and an optimal solution, which still naturally leads to a monotone allocation rule. Let $\alpha \in (0,1]$ and $\beta > 0$ be some parameters which we fix later. Our mechanism is as follows:

\begin{center}
\begin{boxedminipage}{11cm}
\textsc{\textbf{Divisible Agents Capped Valuation (DA-cap)}}
\begin{algorithmic}[1]
\small
\State Let $N=\{i \in A \; |\; c_i \le B\}$, $n = |N|$ and $i^* = \arg\max_{i\in N} \rho_{i}(N,\mathbf{c}) $
\If{$ \rho_{i^*}(N,\mathbf{c}) \ge \beta$} set $x_{i^*}=1$ and $x_{i}=0$ for $i\in N \setminus \{i^*\}$
\Else
\State Rename agents s.t. $\frac{v_1}{c_1} \ge \frac{v_2}{c_2} \ge \dots \ge \frac{v_n}{c_n}$
\State Compute $\mathbf{x^*}$ of $\opt(N, \mathbf{c})$ and set $\mathbf{x}$ s.t. $v(\mathbf{x}) = \alpha v(\mathbf{x^*}) $ with

$x_{i} = x_{i}^*$ for $i<k$, $x_{k} \in (0,x^*_k]$ and $x_{i} = 0$ for $i>k$, $k \le n$
\For{$i \le k$} 
\If{$c_i > \tau_i $} set $x_i = 0$ \EndIf 
\EndFor
\EndIf
\State For $i \in N$ compute payments $p_i$ according to (\ref{payments})
\State \textbf{return} $(\mathbf{x}, \mathbf{p})$
\end{algorithmic}
\end{boxedminipage}
\end{center}

\begin{theorem} \label{Th:DivAndCap}
Mechanism \textsc{\textbf{DA-cap}} with $\alpha \approx 0.38$ and $\beta \approx 0.61$ is truthful, individually rational, budget feasible, computationally efficient and has an approximation guarantee of $\approx 2.62$.
\end{theorem}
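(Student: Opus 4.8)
The plan is to establish the five properties for \textsc{DA-cap} by running in parallel the three arguments that proved Theorem~\ref{Th:DA} --- namely Lemmas~\ref{TrInRa}, \ref{BudFea} and~\ref{AppGua} --- and to isolate the single structural fact about capped valuations that lets each of them go through essentially unchanged. That fact is a reduction to the additive case: because the linear program~(\ref{optProgramCapped}) explicitly enforces $V_j \le M_j$ for every type $j$, an optimal solution $\mathbf{x}^*$ never exceeds a cap, so $l_j(V_j^*) = V_j^*$ and hence $\opt(N,\mathbf{c}) = \sum_{j\in T} l_j(V_j^*) = \sum_{i\in N} v_i x_i^*$. Moreover, for any sub-allocation $\mathbf{x} \le \mathbf{x}^*$ (componentwise) --- in particular the truncation $\mathbf{x}$ computed in the greedy branch and the complementary part $\mathbf{y} = \mathbf{x}^* - \mathbf{x}$ --- each $V_j$ stays below $M_j$, so the capped objective coincides with the additive objective $\sum_i v_i x_i$ on exactly the allocations that appear in the proofs. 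First I would record this observation, which immediately yields that Lemmas~\ref{LemLowBouB} and~\ref{boundOptCost} hold verbatim when $\opt$ denotes the optimum of~(\ref{optProgramCapped}), since their proofs only manipulate the additive quantities $\sum v_i x_i$ and $\sum v_i y_i$ along the efficiency order.

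With this in hand I would adapt the three lemmas. For truthfulness and individual rationality (Lemma~\ref{TrInRa}) the monotonicity argument transfers: $\rho_i$ is still independent of $c_i$, since its numerator $\min\{v_i, M_{t_i}\}$ is fixed and its denominator $\opt_{-i}(N,\mathbf{c})$ does not depend on $i$'s declaration, and (as discussed below) the fraction of $i$ in the truncated optimum is non-increasing in $c_i$. For budget feasibility I would reuse the chain of Lemma~\ref{BudFea}: the threats bound each payment by $x_i \tau_i$, and since the $i^*$-branch was not taken we have $\min\{v_i, M_{t_i}\} < \beta\,\opt_{-i}(N,\mathbf{c})$, hence $\opt(N,\mathbf{c}) \le \opt_{-i}(N,\mathbf{c}) + \min\{v_i, M_{t_i}\} < (1+\beta)\,\opt_{-i}(N,\mathbf{c})$; combined with $\sum_i v_i x_i = v(\mathbf{x}) = \alpha\,\opt(N,\mathbf{c})$ (the additive/capped coincidence on the prefix) this gives $\sum_i p_i \le B$. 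For the approximation guarantee I would split as in Lemma~\ref{AppGua}: in the $i^*$-branch the mechanism collects exactly $l_{t_{i^*}}(v_{i^*}) = \min\{v_{i^*}, M_{t_{i^*}}\} \ge \beta\,\opt_{-i^*}(N,\mathbf{c})$ --- which is precisely why $\rho$ is defined with the $\min$ --- yielding $v(\mathbf{x}) \ge \tfrac{\beta}{1+\beta}\opt$, whereas in the greedy branch the threats do not deselect any winner because $c_i \le \tau_i$ follows from $R \le 1$ and Lemma~\ref{LemLowBouB}. The same balancing of $(1+\beta)/\beta = 1/\alpha$ subject to $R \le 1$ then reproduces $\alpha \approx 0.38$, $\beta \approx 0.61$ and the guarantee $\approx 2.62$. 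Computational efficiency reduces to solving~(\ref{optProgramCapped}) (greedily by efficiency subject to the caps) and to the piecewise payment evaluation already treated in Appendix~\ref{AppComEff}.

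I expect the main obstacle to be the monotonicity step for the greedy branch, that is, showing that when a winning agent $i$ lowers their declared cost the fraction $x_i$ in the truncated optimum does not decrease. Unlike the uncapped case, the optimum of~(\ref{optProgramCapped}) is no longer a clean efficiency-prefix: once a type reaches its cap the program skips further agents of that type, so lowering $c_i$ moves $i$ forward in the order while simultaneously perturbing which agents of $i$'s type are taken, and it must be argued --- using the consistent tie-breaking convention and the marginal bound $\opt(N,\mathbf{c}) - \opt_{-i}(N,\mathbf{c}) \le \min\{v_i, M_{t_i}\}$ --- that neither the reoptimization nor the truncation at the (now larger) threshold $\alpha\,\opt$ can reduce $i$'s allocation. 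Once this is settled, the remaining pieces are routine adaptations of the Section~3 arguments, and the theorem follows exactly as Theorem~\ref{Th:DA} did from Lemmas~\ref{TrInRa}, \ref{BudFea} and~\ref{AppGua}.
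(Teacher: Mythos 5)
Your proposal is correct and follows essentially the same route as the paper's proof in Appendix~\ref{App:proofTheorem}: rerun Lemmas~\ref{TrInRa}, \ref{BudFea} and~\ref{AppGua} with $\rho_i$ built from $\min\{v_i, M_{t_i}\}$, observe that the constraint $V_j \le M_j$ in~(\ref{optProgramCapped}) makes the capped objective coincide with the additive one on $\mathbf{x}^*$ and its sub-allocations so that Lemma~\ref{LemLowBouB} survives, and rebalance the same $\alpha,\beta$. The monotonicity issue you flag as the main obstacle is exactly the crux the paper acknowledges, and it disposes of it with the same one-line structural assertion you sketch (an agent moving forward/backward in the efficiency order can only gain/lose allocation in the truncated optimum), so your treatment is at the same level of detail as the published argument.
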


The proof of Theorem \ref{Th:DivAndCap} is similar to earlier proofs and can be found in Appendix \ref{App:proofTheorem}. Note that the lower bound of Section 3 still holds: In both instances used for proving the lower bound one can define that each agent has the same type $t$ and $M_t = \sum_{i \in A} v_i$.

\begin{remark}
Note that the linear program (\ref{optProgramCapped}) can also be formulated with a different objective function, i.e.,  $\sum_{i \in A} v_i x_i $, when combining the second and third constraint, i.e., $\sum_{i \in A: t_i = j } v_i x_i \le M_j, \forall j$. Alternatively, this section can be interpreted as the allocation having to satisfy some combinatorial structure. One property is needed for all the proofs to hold in this case: If the agents are ordered according to decreasing value over cost ratios and an agent moves to the front (back) of the ordering by changing their declared cost, their fraction selected should increase (decrease) or remain the same. This much depends on the imposed combinatorial structure. 
\end{remark}

\section{Agent Types with Concave Valuation Functions}

Besides a maximum total value an auctioneer may want to acquire from a type of agent, the actual value obtained might decrease as increasingly more value is acquired. This is modeled by a function $l_j$ for each type $j \in T$ that is concave, non-decreasing and $l_{j}(0) =0$. 

In this section, $\opt(A,\mathbf{c})$ corresponds to the value of the optimal solution of the concave program stated below, regarding the set of agents $A$, and $\opt_{-i}(A,\mathbf{c})$ regards the set of agents $A \setminus \{i\}$.
\begin{equation*}
\begin{array}[t]{rrcl@{\;\;\;}l}
\max & \sum_{j \in T} l_{j}(V_{j}) \\
\text{s.t.} & \sum_{i \in A} c_i x_i & \le & B & \\
& \sum_{i \in A: t_i = j } v_i x_i & = & V_j & \forall j \in T\\
& x_i & \in & [0,1] & \forall i \in A
\end{array}
\label{optProgramConcave}
\end{equation*}

As the objective function is concave (sum of concave functions), there exists an optimal solution that we can find in polynomial time\footnote[1]{Note that the feasibility region is compact.}. Additionally, there exists an optimal solution $\mathbf{x^*}$ to this program with the following structure. Let $S_t = \{ i \in A : t_i = t\}$ be the set of agents with type $t \in T$. Then when ordering and renaming the agents in this set according to decreasing efficiency, there exists an integer $k \le |S_t|$ such that the fractions selected are $x^*_{i} = 1$ for $i < k$, $x^*_{k} \in [0,1]$ and $x^*_{i} = 0$ for $k < i \le |S_t|$. For agents $i \in \{1, \dots, |S_t| \}$ we define
\[ v^*_i = l_t(x) - l_t(y) \quad \text{with} \quad x = \sum_{j=1}^{i} v_j x^*_j \quad \text{and} \quad y = \sum_{j=1}^{i-1} v_j x^*_j, \]
so $v^*_i = 0$ for agents $i \in \{ k+1, \dots, |S_t| \}$. When we refer to an optimal solution $\mathbf{x^*}$ of $\opt(A,\mathbf{c})$ in this section, we refer to the $\mathbf{x^*}$ with this structure.

Additionally, given a vector of declared costs $\mathbf{c} = (c_i)_{i \in A}$, we define $\hat{v}_i$ for all $i \in A$. Again let $S_t$ be the set of agents with type $t \in T$ and reorder and rename the agents in this set according to decreasing efficiency. For agents $i \in \{1, \dots, |S_t| \}$ we define

\[ \hat{v}_i = l_t(x) - l_t(y) \quad \text{with} \quad x = \sum_{j=1}^{i} v_j \quad \text{and} \quad y = \sum_{j=1}^{i-1} v_j. \]

\paragraph{The Mechanism.}

The ideas behind Mechanism \textsc{\textbf{DA}} can still be applied when adjusted to the current setting. We consider a different ordering of the agents in the second part of the mechanism and alter $\tau_i$ and $\rho_{i}$ to fit the concave valuation functions:
\[ \tau_i = \hat{v}_i \frac{B}{\alpha(1+\beta) \opt_{-i}(N, \mathbf{c})} \quad \text{ and } \quad \rho_{i}(N,\mathbf{c}) = \frac{l_{t_i}(v_i)}{\opt_{-i}(N, \mathbf{c})}.\]
Note that in this mechanism, the threats $\tau_i$ are not independent of the declared cost of agent $i$, but as $\tau_i$ might decrease when agent $i$ declares a higher cost, this imposes no problem. Again, let $\alpha \in (0,1]$ and $\beta > 0$ be some parameters which we fix later. Our mechanism is as follows: 

\begin{center}
\begin{boxedminipage}{11cm}
\textsc{\textbf{Divisible Agents Concave Valuation (DA-con)}}
\begin{algorithmic}[1]
\small
\State Let $N=\{i \in A \;|\; c_i \le B\}$, $n = |N|$ and $i^* = \arg\max_{i\in N} \rho_{i}(N,\mathbf{c})$
\If{$\rho_{i^*}(N,\mathbf{c}) \ge \beta$} set $x_{i^*}=1$ and $x_{i}=0$ for $i\in N \setminus \{i^*\}$
\Else
\State Compute $\mathbf{x^*}$ of $\opt(N, \mathbf{c})$ and corresponding $(v^*_i)_{i \in N}$
\State Rename agents s.t. $\frac{v^*_1}{c_1} \ge \frac{v^*_2}{c_2} \ge \dots \ge \frac{v^*_n}{c_n}$
\State Set $\mathbf{x}$ s.t. $v(\mathbf{x}) = \alpha v(\mathbf{x^*}) $ with $x_{i} = x_{i}^*$ for $i<k$, $x_{k} \in (0,x^*_k]$ 

and $x_{i} = 0$ for $i>k$, $k \le n$
\For{$i \le k$} 
\If{$c_i > \tau_i $} set $x_i = 0$ \EndIf 
\EndFor
\EndIf
\State For $i \in N$ compute payments $p_i$ according to (\ref{payments})
\State \textbf{return} $(\mathbf{x}, \mathbf{p})$
\end{algorithmic}
\end{boxedminipage}
\end{center}

\begin{lemma}\label{TrIndRat3}
Mechanism \textsc{\textbf{DA-con}} is truthful and individually rational. 
\end{lemma}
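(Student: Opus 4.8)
The plan is to follow the template of Lemma~\ref{TrInRa}: I will show that the allocation rule of \textsc{\textbf{DA-con}} is monotone non-increasing and that $\int_0^\infty x_i(u,\mathbf{c}_{-i})\,du < \infty$ for every agent $i$ and every $\mathbf{c}_{-i}$, after which truthfulness and individual rationality follow at once from Theorem~\ref{Tardos:Payments} (with the payment rule being precisely the one the mechanism computes). As before, I would fix $i$ and $\mathbf{c}_{-i}$, track how $x_i$ responds as $i$ varies its declared cost, and split into the two branches of the mechanism: selection of the single valuable agent $i^*$, and the greedy/threat branch.

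The $i^*$-branch is handled exactly as in Lemma~\ref{TrInRa}. The quantity $\rho_{i^*}(N,\mathbf{c}) = l_{t_{i^*}}(v_{i^*})/\opt_{-i^*}(N,\mathbf{c})$ is independent of $i^*$'s own declared cost, since $\opt_{-i^*}$ omits $i^*$ and $l_{t_{i^*}}(v_{i^*})$ is fixed, whereas for every other agent $i$ the ratio $\rho_i$ moves weakly in the direction of the cost change (it weakly decreases as $i$ lowers its cost and weakly increases as $i$ raises it, because $\opt_{-i}$ is monotone in the remaining costs). Hence lowering a cost keeps the mechanism in the same branch and keeps $i^*$ fully selected, while raising a cost can only cause an agent to lose; throughout, $N$ is unchanged as long as the declared cost stays in $[0,B]$.

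The greedy/threat branch is where the concave setting demands care, since both the ordering key $v_i^*/c_i$ and the threat $\tau_i = \hat v_i\, B/\big(\alpha(1+\beta)\opt_{-i}(N,\mathbf{c})\big)$ now depend on $i$'s own declared cost through the optimal solution $\mathbf{x}^*$ and the within-type efficiency ordering. The structural fact I would establish is the property flagged in the Remark of Section~5: lowering $c_i$ weakly advances $i$ in the efficiency ordering and, by concavity of $l_{t_i}$ together with the prescribed structure of $\mathbf{x}^*$, weakly raises the marginal value $v_i^*$, while $\opt(N,\mathbf{c})$ weakly increases; consequently $i$'s fraction in the greedy fill can only grow. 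Crucially, the threat cannot deselect $i$ when $i$ lowers its cost: advancing within its type weakly increases the cumulative-value marginal $\hat v_i$ (again by concavity), so $\tau_i$ does not decrease, while $c_i$ itself decreases, preserving $c_i \le \tau_i$. Symmetrically, raising $c_i$ makes $i$ retreat in the ordering, weakly lowers $v_i^*$ and $\opt(N,\mathbf{c})$, and weakly lowers $\hat v_i$ and hence $\tau_i$; thus $i$'s greedy fraction weakly shrinks and the condition $c_i > \tau_i$ only becomes easier to meet. This is exactly why, as noted after the mechanism, a $\tau_i$ that decreases in $c_i$ causes no problem: it reinforces monotonicity rather than breaking it. If raising $c_i$ instead flips the mechanism into selecting $i^*$, then $i$ loses outright, which is again consistent with monotonicity.

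Finally, a winning agent in the greedy branch is certainly deselected once its declared cost exceeds $\min\{B,\tau_i\}$: beyond $B$ it leaves $N$, and beyond the threat it is removed by the deselection step. Since $x_i \in [0,1]$ and this threshold is finite, $\int_0^\infty x_i(u,\mathbf{c}_{-i})\,du \le 1\cdot B < \infty$, exactly as in Lemma~\ref{TrInRa}, so Theorem~\ref{Tardos:Payments} applies. The main obstacle is the third paragraph: unlike the linear case, the effective value $v_i^*$ is endogenous, so I must lean on the concavity of the $l_t$ and on the explicit structure imposed on $\mathbf{x}^*$ to guarantee both that advancing (retreating) in the ordering weakly raises (lowers) the selected fraction, and that the cost-dependence of $\tau_i$ (through $\hat v_i$) points in the monotonicity-preserving direction.
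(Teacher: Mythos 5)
Your proposal is correct and follows essentially the same route as the paper: the paper's own proof simply invokes the argument of Lemma~\ref{TrInRa} with the new $\rho_i$, plus exactly the two observations you elaborate — that the structure of $\mathbf{x}^*$ makes the selected fraction move monotonically with the declared cost, and that $\tau_i$ (through $\hat v_i$ and concavity of $l_{t_i}$) can only increase when $i$ lowers its cost, so the threat never deselects a cost-lowering agent. Your write-up just supplies more detail than the paper's two-line justification.
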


\begin{proof}
Proving that the mechanism is truthful and individually rational is identical to the proof of Lemma \ref{TrInRa}, when using the new definition of $\rho_{i}(N,\mathbf{c})$. Note that by the way we define $\mathbf{x^*}$, if in the second part of the mechanism an agent increases (decreases) their declared cost, their fraction selected can only decrease (increase) or remain the same. Additionally, note that the threat will not deselect an agent $i$ if $i$ decreases their cost, as then the threat can only increase or remain the same. \hfill \qed 
\end{proof}

\begin{lemma}\label{BudFea3}
Mechanism \textsc{\textbf{DA-con}} is budget feasible.
\end{lemma}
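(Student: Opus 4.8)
The plan is to mirror the proof of Lemma~\ref{BudFea} and split on the two cases of the mechanism. If $\mathbf{x}$ is produced by selecting the single agent $i^*$, then $x_{i^*}\le 1$ and $i^*$ certainly loses once it declares a cost exceeding $B$ (it then leaves $N$); the monotone payment bound gives $\sum_{i\in N}p_i=p_{i^*}\le B$ verbatim as in Lemma~\ref{BudFea}. The real work is in the second part of the mechanism. There I would first bound each payment by $p_i\le x_i\tau_i$. This is inherited from the monotonicity established in Lemma~\ref{TrIndRat3}: because the threat $\tau_i=\hat{v}_i\,B/(\alpha(1+\beta)\opt_{-i}(N,\mathbf{c}))$ is non-increasing in the declared cost of $i$ --- moving $i$ backwards in the order of its own type only decreases the marginal $\hat{v}_i$ by concavity, while $\opt_{-i}(N,\mathbf{c})$ is independent of $c_i$ --- the threshold bid of $i$ is at most $\tau_i$, and monotonicity of the allocation then yields $p_i\le x_i\tau_i$.

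Next I would replace $\opt_{-i}(N,\mathbf{c})$ in $\tau_i$ by $\opt(N,\mathbf{c})$. The key inequality is $\opt(N,\mathbf{c})\le(1+\beta)\opt_{-i}(N,\mathbf{c})$. To obtain it, delete $i$ from an optimal solution: this lowers $V_{t_i}$ by $v_i x^*_i$, and by concavity together with $l_{t_i}(0)=0$ the value loss is at most $l_{t_i}(v_i x^*_i)\le l_{t_i}(v_i)$, so $\opt(N,\mathbf{c})\le\opt_{-i}(N,\mathbf{c})+l_{t_i}(v_i)$. Since the mechanism reached its second part, $\rho_{i^*}(N,\mathbf{c})<\beta$ and hence $l_{t_i}(v_i)<\beta\,\opt_{-i}(N,\mathbf{c})$ for every $i$, which gives the claim and thus $\tau_i\le\hat{v}_i\,B/(\alpha\,\opt(N,\mathbf{c}))$. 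Combining,
\[ \sum_{i\in N}p_i \;\le\; \sum_{i=1}^{k}x_i\tau_i \;\le\; \frac{B}{\alpha\,\opt(N,\mathbf{c})}\sum_{i=1}^{k}x_i\hat{v}_i. \]

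The crux, and the step I expect to be the main obstacle, is $\sum_{i=1}^{k}x_i\hat{v}_i\le\alpha\,\opt(N,\mathbf{c})=v(\mathbf{x})$, after which the display above is at most $B$ and we are done. I would prove this type by type. First I would check that, restricted to one type $t$, the global order by $v^*_i/c_i$ agrees with the within-type efficiency order on the agents with $v^*_i>0$: for $i$ below the within-type optimal cutoff $k_t$ we have $v^*_i=\hat{v}_i$, and $\hat{v}_i/v_i$ is the average slope of $l_t$ over $[\sum_{j<i}v_j,\sum_{j\le i}v_j]$, which is non-increasing by concavity, so $\hat{v}_i/c_i=(\hat{v}_i/v_i)(v_i/c_i)$ is a product of two non-increasing positive factors and is non-increasing; the optimal partial agent $k_t$ has an even smaller ratio and the agents with $v^*_i=0$ sit at the back. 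Hence the mechanism selects a within-type prefix $1,\dots,m_t$ with $m_t\le k_t$, every agent but the last taken fully ($x_i=x^*_i=1$) and only $x_{m_t}\le x^*_{m_t}$ possibly fractional. Telescoping $\hat{v}_i$ and applying concavity of $l_t$ (Jensen between $\sum_{j<m_t}v_j$ and $\sum_{j\le m_t}v_j$ with weights $1-x_{m_t}$ and $x_{m_t}$) then gives
\[ \sum_{i=1}^{m_t}x_i\hat{v}_i = l_t\Bigl(\sum_{j<m_t}v_j\Bigr)+x_{m_t}\Bigl(l_t\Bigl(\sum_{j\le m_t}v_j\Bigr)-l_t\Bigl(\sum_{j<m_t}v_j\Bigr)\Bigr)\le l_t(V_t). \]
Summing over all types yields $\sum_{i=1}^{k}x_i\hat{v}_i\le\sum_{t\in T}l_t(V_t)=v(\mathbf{x})=\alpha\,\opt(N,\mathbf{c})$, closing the proof. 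The delicate points are confirming that the selection is genuinely a within-type prefix (so the telescoping is valid) and treating the single fractional agent correctly via concavity; everything else is bookkeeping parallel to Lemma~\ref{BudFea}.
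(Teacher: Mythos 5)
Your proof is correct and follows essentially the same route as the paper: the same case split, the same bound $p_i \le x_i\tau_i$ obtained from monotonicity plus the fact that $\tau_i$ is non-increasing in $i$'s declared cost, and the same use of $\opt(N,\mathbf{c}) \le \opt_{-i}(N,\mathbf{c}) + l_{t_i}(v_i) < (1+\beta)\opt_{-i}(N,\mathbf{c})$ to remove the $(1+\beta)$ factor. The only difference is at the final inequality $\sum_{i\le k} x_i\hat{v}_i \le \alpha\,\opt(N,\mathbf{c})$, which the paper dispatches with the terse per-agent remark that concavity and $x_i\le x_i^*$ give $x_i\hat{v}_i\le v_i^*$, whereas you give a more explicit type-by-type telescoping argument (checking that the selection is a within-type prefix and handling the single fractional agent via concavity); this is a more careful elaboration of the same idea rather than a different approach.
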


\begin{proof}
If, given declared costs $\mathbf{c} = (c_i)_{i \in A}$, $\mathbf{x}$ is computed by selecting agent $i^*$, the proof is identical to Lemma \ref{BudFea}. Otherwise the allocation vector is computed in the second part of the mechanism. By construction we know that the threshold bid for agent $i$ is smaller than or equal to our threat $\tau_i$, as $\tau_i$ can only decrease or remain the same when agent $i$ increases their declared cost. Therefore, and because the allocation rule is monotone, the payment of agent $i$ can be bounded by $x_i \tau_i$ and
\[\sum_{i\in N} p_i = \sum_{i=1}^{k} p_{i} \le \sum_{i=1}^{k} x_i \hat{v}_i \frac{B}{\alpha(1+\beta) \opt_{-i}(N, \mathbf{c})} \le \sum_{i=1}^{k} x_i \hat{v}_i \frac{B}{\alpha \opt(N, \mathbf{c})} \le B,\]
were the second inequality follows from $\opt(N, \mathbf{c}) \le \opt_{-i}(N, \mathbf{c}) + l_{t_i}(v_i)  < (1 + \beta)\opt_{-i}(N, \mathbf{c})$, as the mechanism did not select agent $i^*$. The last inequality follows as the functions $l_t$ are concave and $x_i \le x_i^*$ by construction, so $x_i \hat{v}_i \le v^*_i$. Hence, the mechanism is budget feasible. \hfill \qed 
\end{proof}

In this setting Lemma \ref{LemLowBouB} does not hold and a different version can be used to prove that the treats will not discard any agents. As in Section 3, the Lemma will only be used if \textsc{\textbf{DA-con}} computed the allocation vector in a greedy manner and therefore it is additionally assumed that the $\rho_{i}$'s are bounded by $\beta$.  

\begin{lemma}\label{LemLowBouB2}
Let $\mathbf{x^*}$ be the corresponding solution of $\opt(A, \mathbf{c})$ in which at most one agent of each type is fractionally selected (note that such a solution always exists). Assume agents $i\in A$, $n = |A|$, are ordered such that $\frac{v^*_1}{c_1} \ge \dots \ge \frac{v^*_n}{c_n}$ and $\rho_{i^*}(A, \mathbf{c}) \le \beta$. Let $k\le n$ be an integer such that $\sum_{i=1}^{k-1} v^*_i < \alpha \opt(A, \mathbf{c}) \le \sum_{i=1}^{k} v^*_i$, $\alpha \in (0,1]$. Then $\frac{c_k}{v^*_k} (1 - \alpha - t \beta) \opt(A, \mathbf{c}) \le B$.
\end{lemma}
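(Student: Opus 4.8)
The plan is to mimic the proof of Lemma~\ref{LemLowBouB}, splitting the optimal solution into a ``head'' of the most efficient agents and a ``tail'', and then lower-bounding the cost of the tail using the ordering. The first observation I would record is the telescoping identity $\sum_{i=1}^{n} v^*_i = \opt(A,\mathbf{c})$: within each type $t$ the marginal values $v^*_i$ sum to $l_t(V_t)$, and summing over types recovers the objective. Since $\sum_{i=1}^{k-1} v^*_i < \alpha\,\opt(A,\mathbf{c})$ by the choice of $k$, this immediately gives the tail bound $\sum_{i=k}^{n} v^*_i > (1-\alpha)\,\opt(A,\mathbf{c})$, and also guarantees $v^*_k > 0$ so that $c_k/v^*_k$ is well defined.

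Next I would run the budget chain. Feasibility of $\mathbf{x^*}$ gives $B \ge \sum_{i=k}^{n} c_i x^*_i$, and the ordering $\frac{v^*_1}{c_1}\ge\dots\ge\frac{v^*_n}{c_n}$ yields $c_i \ge \frac{c_k}{v^*_k} v^*_i$ for every $i \ge k$ (also when $v^*_i = 0$). Multiplying by $x^*_i \ge 0$ and summing, I obtain $B \ge \frac{c_k}{v^*_k} \sum_{i=k}^{n} v^*_i x^*_i$. The point where this setting departs from Lemma~\ref{LemLowBouB}, and the main obstacle, is that here $\sum_{i=k}^n v^*_i x^*_i$ is \emph{not} equal to $\sum_{i=k}^n v^*_i$, because the fractionally selected agents have $x^*_i < 1$. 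I therefore write $\sum_{i=k}^n v^*_i x^*_i = \sum_{i=k}^n v^*_i - \sum_{i=k}^n v^*_i(1-x^*_i)$ and must control the correction term, which only involves fractional agents.

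To bound the correction I would exploit the structure of $\mathbf{x^*}$ together with the hypothesis $\rho_{i^*}(A,\mathbf{c}) \le \beta$. By concavity of each $l_t$ with $l_t(0)=0$, the marginal value satisfies $v^*_i \le l_{t_i}(v_i x^*_i) \le l_{t_i}(v_i)$, since incrementing from a higher base point yields a smaller increase and $l_{t_i}$ is non-decreasing. The assumption $\rho_{i^*}(A,\mathbf{c}) \le \beta$ means $l_{t_i}(v_i) \le \beta\,\opt_{-i}(A,\mathbf{c}) \le \beta\,\opt(A,\mathbf{c})$ for every $i$, so each $v^*_i \le \beta\,\opt(A,\mathbf{c})$. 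Because $\mathbf{x^*}$ fractionally selects at most one agent per type, there are at most $t$ indices with $1-x^*_i > 0$, whence $\sum_{i=k}^n v^*_i(1-x^*_i) \le t\beta\,\opt(A,\mathbf{c})$. Combining, $\sum_{i=k}^n v^*_i x^*_i \ge (1-\alpha-t\beta)\,\opt(A,\mathbf{c})$, and substituting into the budget chain gives $B \ge \frac{c_k}{v^*_k}(1-\alpha-t\beta)\,\opt(A,\mathbf{c})$, as claimed. If $1-\alpha-t\beta \le 0$ the inequality holds trivially, so no positivity assumption on this factor is needed.
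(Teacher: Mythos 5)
Your proof is correct and follows essentially the same route as the paper's: split off the tail $\{k,\dots,n\}$, use the ordering to get $B \ge \frac{c_k}{v^*_k}\sum_{i\ge k} v^*_i x^*_i$, and absorb the loss from the at most $t$ fractionally selected agents via $v^*_i \le l_{t_i}(v_i) \le \beta\,\opt_{-i}(A,\mathbf{c}) \le \beta\,\opt(A,\mathbf{c})$. The only (welcome) additions are your explicit justifications that $v^*_k>0$ and that the claim is vacuous when $1-\alpha-t\beta\le 0$.
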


\begin{proof}
For convenience let $\opt = \opt(A, \mathbf{c})$. Note that we can split $\mathbf{x}^*$ into $\mathbf{x} = (x^{*}_{1}, \dots, x^{*}_{k-1}, 0, \dots, 0)$ and $\mathbf{y} = (0, \dots, 0, x^{*}_{k}, x^{*}_{k+1}, \dots, x^{*}_{n})$ such that $\mathbf{x}^* = \mathbf{x} + \mathbf{y}$, $\sum_{i=1}^{k-1} v^*_i < \alpha \opt$ and $\sum_{i=k}^{n} v^*_i > (1 - \alpha) \opt$. Let $S = \{ k \le i \le n: 0 < y_i < 1 \}$. By feasibility of $\mathbf{x}^*$, and thus $\mathbf{y}$, and the ordering of the agents it follows that:
\begin{align*}
B \geq c(\mathbf{y}) 
& = \sum_{i=k}^{n} \frac{v_i^*}{v_i^*} c_i y_i \geq \frac{c_k}{v_k^*} \sum_{i=k}^{n} v_i^*y_i 
=  \frac{c_k}{v_k^*} \bigg( \sum_{i=k}^{n} v_i^* - \sum_{i \in S} (1- y_i) v_i^* \bigg) \\
& > \frac{c_k}{v_k^*} \bigg( (1-\alpha)\opt - \sum_{i \in S} v_i^* \bigg) 
\ge \frac{c_k}{v_k^*} ( (1-\alpha)\opt - \sum_{i \in S} \beta \opt_{-i} ) \\
& \ge \frac{c_k}{v_k^*} (1-\alpha - t \beta)\opt, 
\end{align*}
as $v^*_i = 0$ for agents with $x^*_{i} = 0$ (second equality) and $\rho_{i^{*}}(A, \mathbf{c}) \le \beta$, so $\forall i$: $v_i^* \le l_{t_i}(v_i) \le \beta \opt_{-i}$. The last inequality follows as at most one agent of each type is fractionally selected in $\mathbf{x^*}$, and thus in $\mathbf{y}$. \hfill \qed 
\end{proof}

\begin{lemma}\label{AppGua3}
Mechanism \textsc{\textbf{DA-con}} has an approximation guarantee of $\frac{1 + \beta}{\beta}$  if $\alpha = \frac{\beta}{1+\beta}$ and 
\begin{equation}
\beta = \frac{\sqrt{t^2 + 6t +5}}{2(1+t)} - \frac{1}{2}.
\label{beta}
\end{equation}
\end{lemma}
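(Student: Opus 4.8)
The plan is to mirror the two-case structure of the proof of Lemma~\ref{AppGua}, balancing the guarantee $\frac{1+\beta}{\beta}$ coming from the ``select $i^*$'' branch against the guarantee $\frac{1}{\alpha}$ coming from the greedy branch, and to use Lemma~\ref{LemLowBouB2} as the concave analogue of Lemma~\ref{LemLowBouB} to certify that the threats never fire in the greedy branch. Suppose $\mathbf{x}$ is computed by \textsc{\textbf{DA-con}} on declared costs $\mathbf{c}$.

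First I would dispose of the branch in which $x_{i^*}=1$. Here $V_{t_{i^*}}=v_{i^*}$ and $V_j=0$ for $j\neq t_{i^*}$, so $v(\mathbf{x})=l_{t_{i^*}}(v_{i^*})=\rho_{i^*}(N,\mathbf{c})\,\opt_{-i^*}(N,\mathbf{c})\ge \beta\,\opt_{-i^*}(N,\mathbf{c})$. Combining this with $\opt(N,\mathbf{c})\le \opt_{-i^*}(N,\mathbf{c})+l_{t_{i^*}}(v_{i^*})$ (the same bound used in Lemma~\ref{BudFea3}) gives $v(\mathbf{x})\ge \beta(\opt(N,\mathbf{c})-v(\mathbf{x}))$, hence $v(\mathbf{x})\ge \frac{\beta}{1+\beta}\opt(N,\mathbf{c})$, i.e.\ a guarantee of $\frac{1+\beta}{\beta}$. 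This part is essentially identical to Lemma~\ref{AppGua}.

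The substance is the greedy branch, where initially $v(\mathbf{x})=\alpha\,\opt(N,\mathbf{c})$ and I must show the loop $c_i>\tau_i$ never deselects a winning agent, so that this value is retained and the guarantee is $\frac{1}{\alpha}$. For a selected agent $i\le k$ I would bound $c_i$ from above via Lemma~\ref{LemLowBouB2}: from the ordering $\frac{v^*_i}{c_i}\ge \frac{v^*_k}{c_k}$ together with the conclusion $\frac{c_k}{v^*_k}(1-\alpha-t\beta)\opt\le B$ one gets $c_i\le v^*_i\frac{B}{(1-\alpha-t\beta)\opt(N,\mathbf{c})}$. On the other side I would bound $\tau_i$ from below by using $\opt_{-i}(N,\mathbf{c})\le \opt(N,\mathbf{c})$ and then the inequality $\hat v_i\ge v^*_i$, which holds because $\hat v_i$ and $v^*_i$ are increments of the concave non-decreasing $l_{t_i}$ sharing the same left endpoint (all more efficient same-type agents are fully selected in $\mathbf{x^*}$) while $\hat v_i$ reaches at least as far to the right; this yields $\tau_i\ge v^*_i\frac{B}{\alpha(1+\beta)\opt(N,\mathbf{c})}$. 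Comparing the two bounds, $c_i\le \tau_i$ holds as soon as $\alpha(1+\beta)\le 1-\alpha-t\beta$, the exact analogue of the condition $R\le 1$ in Lemma~\ref{AppGua}. Before invoking Lemma~\ref{LemLowBouB2} I would check its hypotheses: we are in the branch with $\rho_{i^*}(N,\mathbf{c})<\beta$, the mechanism orders by $v^*_i/c_i$, and (as in Lemma~\ref{AppGua}) the integer $k$ of the mechanism coincides with that of the lemma for the common value of $\alpha$.

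Finally, I would fix the parameters by imposing equality in the balancing condition $\frac{1+\beta}{\beta}=\frac{1}{\alpha}$, giving $\alpha=\frac{\beta}{1+\beta}$ (so that $\alpha(1+\beta)=\beta$), and equality in the threat condition $\alpha(1+\beta)=1-\alpha-t\beta$. Substituting the first into the second gives $\frac{\beta}{1+\beta}=1-(1+t)\beta$, i.e.\ the quadratic $(1+t)\beta^2+(1+t)\beta-1=0$, whose positive root is $\beta=\frac{\sqrt{(1+t)(t+5)}}{2(1+t)}-\frac{1}{2}=\frac{\sqrt{t^2+6t+5}}{2(1+t)}-\frac{1}{2}$, matching~(\ref{beta}); the common guarantee is then $\frac{1+\beta}{\beta}$. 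I expect the main obstacle to be the greedy branch, specifically the two ingredients distinguishing it from Section~3: extracting $\hat v_i\ge v^*_i$ cleanly from concavity so that the threat (written with $\hat v_i$) is dominated by the cost bound (written with $v^*_i$), and correctly carrying the weakened factor $1-\alpha-t\beta$ from Lemma~\ref{LemLowBouB2}. The extra term $t\beta$, which arises because up to one agent of each of the $t$ types can be fractional in $\mathbf{x^*}$, is precisely what forces the dependence of $\beta$ (and hence the guarantee) on $t$ and prevents recovering the constant of Theorem~\ref{Th:DA}.
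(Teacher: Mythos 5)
Your proposal is correct and follows essentially the same route as the paper: the same two-case split, the same use of Lemma~\ref{LemLowBouB2} with the ordering by $v^*_i/c_i$ to show the threats never fire when $R=\alpha(1+\beta)/(1-\alpha-t\beta)\le 1$, the same inequality $v^*_i\le\hat v_i$ (which you justify slightly more explicitly than the paper's ``by construction''), and the same balancing $\frac{1+\beta}{\beta}=\frac{1}{\alpha}$ with $R=1$ yielding the quadratic $(1+t)\beta^2+(1+t)\beta-1=0$ and formula~(\ref{beta}).
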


\begin{proof}
Given declared costs $\mathbf{c} = (c_i)_{i \in A}$, let $\mathbf{x}$ be computed by the mechanism. If the mechanism selected agent $i^*$, we have
\[ l_{t_i^*}(v_{i^*}) \ge \beta \opt_{-i^*}(N, \mathbf{c}) \ge \beta \opt(N, \mathbf{c}) - \beta l_{t_i^*}(v_{i^*}) \Leftrightarrow  v(\mathbf{x}) \ge \frac{\beta}{1 + \beta} \opt(N, \mathbf{c}).\]

Otherwise the mechanism computed $\mathbf{x}$ in a greedy manner. We want no agent $i \le k$ to be deselected by our threat, i.e., we want $c_i \le \tau_i$. If  
\[ R:= \frac{\alpha(1+\beta)}{(1 - \alpha - t \beta)} \le 1 \quad \text{ then } \quad \frac{\alpha(1+\beta)}{(1 - \alpha - t \beta)} \frac{v^*_i}{\hat{v}_i}  \opt_{-i}(N, \mathbf{c}) \le \opt(N, \mathbf{c}),\]
as $\frac{v^*_i}{\hat{v}_i} \le 1$ by construction. Rearranging terms leads to
\[ \tau_i = \hat{v}_i \frac{B}{\alpha(1+\beta) \opt_{-i}(N, \mathbf{c})} \ge  v^*_i \frac{B}{(1 - \alpha - t \beta ) \opt(N, \mathbf{c})} \ge c_i, \]
where the last inequality follows from Lemma~\ref{LemLowBouB2} as \textsc{\textbf{DA-con}} computed $\mathbf{x^*}$ such that most one agent of each type is fractionally selected. Balancing the approximation guarantees $\frac{1 + \beta}{\beta} = \frac{1}{\alpha}$ subject to $R \le 1$ leads to the desired result. \qed  
\end{proof}

\begin{theorem}\label{TH:DA-CV}
Mechanism \textsc{\textbf{DA-con}} with $\beta$ as in (\ref{beta}) and $\alpha = \frac{\beta}{1 + \beta}{}$ is truthful, individually rational, budget feasible and has an approximation guarantee of $\frac{1 + \beta}{\beta}$.
\end{theorem}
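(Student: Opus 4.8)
The plan is to obtain Theorem~\ref{TH:DA-CV} as a direct assembly of the three preceding lemmas, each of which already supplies one of the claimed properties for Mechanism \textsc{\textbf{DA-con}}. Truthfulness and individual rationality come from Lemma~\ref{TrIndRat3} (monotonicity of the allocation rule together with the payment rule~(\ref{payments}) of Theorem~\ref{Tardos:Payments}, using the concave-adapted $\rho_i$ and the fact that $\tau_i$ only moves in the safe direction when $i$ lowers its cost); budget feasibility from Lemma~\ref{BudFea3}; and the approximation guarantee $\frac{1+\beta}{\beta}$ from Lemma~\ref{AppGua3}. So at the level of the theorem there is essentially nothing to prove beyond checking that the stated parameters satisfy the hypotheses of these lemmas.

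The second step is therefore to justify the parameter choice. Setting $\alpha=\frac{\beta}{1+\beta}$ equalizes the two candidate bounds $\frac{1+\beta}{\beta}$ (selecting $i^*$) and $\frac{1}{\alpha}$ (the greedy phase), collapsing the governing $\max$ to a single value. The remaining constraint from Lemma~\ref{AppGua3} is $R=\frac{\alpha(1+\beta)}{1-\alpha-t\beta}\le 1$; substituting $\alpha(1+\beta)=\beta$ and imposing $R=1$ reduces to $(1+t)(\beta^2+\beta)=1$, whose positive root, using $t^2+6t+5=(t+1)(t+5)$, is exactly~(\ref{beta}). I would also record that this root is feasible: since $R=1$ gives $1-\alpha-t\beta=\beta>0$, Lemma~\ref{LemLowBouB2} applies with a strictly positive coefficient, so the threats genuinely do not deselect any greedily chosen agent.

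The conceptual heart --- and the step I expect to carry all the difficulty --- is the two-sided squeeze already resolved inside Lemmas~\ref{BudFea3} and~\ref{AppGua3}: the threat $\tau_i$ must be \emph{large} enough that no agent chosen in the greedy phase is discarded (needed for the approximation bound, via Lemma~\ref{LemLowBouB2}), yet \emph{small} enough that $\sum_i p_i\le B$ (needed for budget feasibility, via $x_i\hat v_i\le v^*_i$ and $\opt(N,\mathbf{c})<(1+\beta)\opt_{-i}(N,\mathbf{c})$). It is exactly this tension that introduces the $\frac{1}{1-\alpha-t\beta}$ factor and hence the $t$-dependence of~(\ref{beta}), so that the resulting guarantee $\frac{1+\beta}{\beta}$ grows linearly in the number of types $t$. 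Finally, unlike the earlier summary theorems, I would not assert computational efficiency here: evaluating the payment integral requires tracking how the optimal concave allocation varies with the declared cost $u$, which need not be as tractable as the piecewise-linear case of Section~3, and the statement correctly omits that property.
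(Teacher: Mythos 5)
Your proposal is correct and follows the same route as the paper: Theorem~\ref{TH:DA-CV} is obtained there precisely by combining Lemmas~\ref{TrIndRat3}, \ref{BudFea3} and \ref{AppGua3}, with the parameter choice justified inside Lemma~\ref{AppGua3} by balancing $\frac{1+\beta}{\beta}=\frac{1}{\alpha}$ subject to $R\le 1$, exactly as you derive via $(1+t)(\beta^2+\beta)=1$. Your added observations (positivity of $1-\alpha-t\beta=\beta$, and the deliberate omission of computational efficiency) are consistent with the paper's discussion.
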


Theorem \ref{TH:DA-CV} follows from Lemmas \ref{TrIndRat3}, \ref{BudFea3} and \ref{AppGua3}. The combination of concave valuation functions and the possibility of multiple fractional agents in $\mathbf{x}^*$ made it increasingly difficult to find a good threat. As a result, the approximation guarantee grows linearly with the number of agent types, and some approximation guarantees for small values of $t$ can be found in Table \ref{Tab:concaveAppGua}.

\begin{table}[t]
\setlength{\tabcolsep}{7pt}
\renewcommand{\arraystretch}{1.2}
\begin{center}
\caption{{\footnotesize Approximation guarantee for $t = 2, \dots, 6$.}}
\begin{tabular}{|c||c|c|c|c|c|} 
\hline
$t$ & 2 & 3 & 4 & 5 & 6 \\
\hline 
 $\frac{1 + \beta}{\beta}$ & 4.80 & 5.83 & 6.86 & 7.88 & 8.89\\
\hline
\end{tabular}
\label{Tab:concaveAppGua}
\end{center}
\vspace*{-.5cm}
\end{table}

When an instance has multiple agent types, there are (rather artificial) cases where the functions $l_j$ may intersect infinitely many times and the number of breakpoints of the payment function is not polynomially bounded. Mechanism \textsc{\textbf{DA-con}} remains computationally efficient if the number of such breakpoints is polynomially bounded. Note that the lower bound of Section 3 still holds: In both instances used for proving the lower bound one can define that each agent has the same type $t$ and $l_{t}(x) = x$.

\paragraph{One agent type:} Note that if there is only one type of agent, we can still use mechanism \textsc{\textbf{DA}}, where $\opt$ and $\opt_{-i}$ are still w.r.t. an additive linear valuation function. Let $\mathbf{x}$ and $\mathbf{x^*}$ be the allocations of mechanism \textsc{\textbf{DA}} and the optimal fractional solution respectively. By Theorem \ref{Th:DA}, we have $\frac{v(\mathbf{x})}{v(\mathbf{x^*})} \ge \frac{1}{2.62}$. Let $l(\cdot)$ be the concave valuation function, which, for convenience, now maps an allocation vector to a real number. Note that $\mathbf{x^*}$ is also an optimal fractional solution in the case of one concave valuation function. Let $l(\mathbf{x}) = k_1 \cdot v(\mathbf{x})$ and $l(\mathbf{x^*}) = k_2 \cdot v(\mathbf{x^*})$. We have 
\[ \frac{l(\mathbf{x})}{l(\mathbf{x^*})} = \frac{k_1 \cdot v(\mathbf{x}) }{k_2 \cdot v(\mathbf{x^*})} \ge  \frac{v(\mathbf{x})}{v(\mathbf{x^*})} \ge \frac{1}{2.62},\]
as $\frac{k_1}{k_2} \ge 1$ by concavity of $l(\cdot)$.

\section{Conclusion and Future Work}

We considered budget feasible mechanisms for procurement auctions in the case of divisible agents. The challenge of exploiting the divisibility of agents is to limit the consequences of selecting agents fractionally, so that budget feasibility can be ensured. To achieve this in the initial setting with an additive linear valuation function, we introduced the notion of threats and found an upper bound of 2.62 and a lower bound of 1.25 on the approximation guarantee. When introducing a notion of competitiveness between agents, we found an upper and lower bound on the approximation guarantee of 2 and 1.18 respectively. In both settings the gap between the upper and lower bound on the approximation guarantee remains and it would be interesting to see if both of these gaps can be tightened or even closed. Both results can be extended to the setting with multiple agent types and a linear capped valuation function for each type. This setting can be used to model a maximum total value of each type that the auctioneer wishes to buy, but selecting more value will never have a negative impact.

An interesting extension of our multiple agent type setting is to consider concave (non-decreasing) valuation functions. Such functions can be used to model diminishing returns of the auctioneer, i.e., the marginal increase in value decreases as more value of a certain type is selected. Our preliminary result shows that the general mechanism can be adapted to this more general setting, though at the expense of a worse approximation guarantee (grows linearly with the number of agent types). As in the setting with linear capped valuation functions, multiple agents can be selected fractionally, but the concave valuation functions increase the difficulty of finding a good threat. 

It would be natural to also study the divisible case in settings with other valuation functions. Additionally, one could study the problem with a required structure on the allocation to capture relations between agents.

\subsubsection{Acknowledgements.} We thank Georgios Amanatidis for proposing to study budget feasible mechanisms for divisible agents when he was a postdoc at CWI. Part of this work was sponsored by the Open Technology Program of the
Dutch Research Council (NWO), project number 18938.

\appendix

\section{Computationally efficient}\label{AppComEff}

We explain why the mechanism \textsc{\textbf{DA}} is computationally efficient. This is trivial for the majority of the mechanism, but may not be for the computation of the payment vector. Computing the payments according to (\ref{payments}) can also be done in polynomial time, as these payment functions are piecewise and have a polynomially bounded number of breakpoints. Note that $p_i = 0$ if $x_i=0$. 

If the mechanism selects agent $i^*$, the payment function of $i^*$ only has one breakpoint, namely at $B$ or at some $c' < B$ for which another agent becomes $i^*$. Otherwise agents are selected such that $v(\mathbf{x}) = \alpha \opt$. We will consider the maximum number of breakpoints the payment function can have in this case for some agent $i$. Suppose the agents are ordered according to decreasing efficiency, and let $j \le n$ be the highest index of an agent selected in $\opt$. If initially $x_i =1$, agent $i$ can increase their declared cost until $i$ becomes agent $k$ and is selected fractionally. This is the first breakpoint of the payment function. If $i$ then continues to increase their declared cost, $i$ will at some point move a place up in the ordering and could then still be fractionally selected. This is another breakpoint of the payment function, as $i$'s fraction selected decreases by a substantial amount at this point. When $i$ increases their declared cost, the index $j$ also decreases at some point. This can be another breakpoint of the payment function, as the slope might then change. When $i$ continues to increase their cost, at some point $i$'s place in the ordering will be equal to the index $j$ of that moment. In this case $i$ will not be selected as the fractional agent as then we would have $v_i > (1-\alpha) \opt \ge \beta \opt_{-i}$, by the values of $\alpha$ and $\beta$. The number of breakpoints of the payment function of agent $i$ is therefore bounded by $j-i \le n-1$. Note that the number of breakpoints can be smaller than $j-i$ by the threat we impose or by the mechanism selecting $i^*$. Additionally, the sub-functions of the payment function are well-defined and can be found efficiently. 

Note that for the mechanism \textsc{\textbf{DA-$\theta$}}, the number of breakpoints of the payment function of an agent $i$ is also bounded by $n-1$. If agent $i=1$ would move all the way to the back of the ordering, $i$ will not be selected as then $v_i > (1-\alpha) \opt \ge \beta \opt_{-i}$, by the values of $\alpha$ and $\beta$.

\section{Lower Bounds Section 4 }\label{sec:LowBouIndiTheta}

\paragraph{Divisible Case:} Next we show that if $\theta \ge 2$ no deterministic, truthful, individually rational and budget feasible mechanism exists with an approximation guarantee of $(\frac{19}{16} - \epsilon)$, for some $\epsilon>0$. For contradiction, assume such a mechanism does exist. Consider the following two instances, both with budget $B$ and two agents with equal valuations: \[\mathcal{I}_1 = \Big (B, \mathbf{v} = (1, 1), \mathbf{c} = \Big (\frac{2}{3}B,\frac{2}{3}B \Big ) \Big)  \text{ and } \mathcal{I}_2 = \Big (B, \mathbf{v} = (1, 1), \mathbf{c} = \Big (\frac{1}{3}B,\frac{2}{3}B \Big ) \Big ).\]
Note that both instances satisfy (\ref{eqTheta}) for $\theta \ge 2$. In the first instance $\opt = 1.5$, so for the approximation guarantee to hold, an allocation vector must satisfy $v(\mathbf{x})  > \frac{24}{19}$. Therefore, any such mechanism must have some agent $i$ for which $x_i > \frac{12}{19}$ and assume w.l.o.g. that this is agent 1. In the second instance $\opt = 2$, so for the approximation guarantee to hold, an allocation vector must now satisfy $v(\mathbf{x}) > \frac{32}{19}$. By the previous instance and individual rationality, it follows that agent 1 can guarantee itself a utility of at least $\frac{4}{19}B$ by deviating to $\frac{2}{3}B$. As agent 1 must be somewhat selected to achieve the approximation guarantee, we have $p_1 > \frac{1}{3}B x_1+ \frac{4}{19} B$ by truthfulness. Therefore in the best case, if agent 1 is entirely selected, this leads to a budget left smaller than $\frac{26}{57} B$. By spending this on agents 2, the value that can be acquired is smaller than $\frac{39}{57}$, leading to an allocation vector with value smaller than $\frac{96}{57}$ and resulting in a contradiction.

Note that this instance can be adjusted to a specific value of $\theta$ as follows:
\[\mathcal{I}_1 = \Big (B, \mathbf{v}, \mathbf{c} = \Big(\frac{\theta}{\theta +1}B,\frac{\theta}{\theta +1}B \Big ) \Big )  \text{ and }  \mathcal{I}_2 = \Big (B, \mathbf{v}, \mathbf{c} = \Big(\frac{1}{\theta +1}B,\frac{\theta}{\theta +1}B \Big ) \Big),\]
to prove that no mechanism exists with an approximation guarantee of $((5 \theta^2 -1)/(4 \theta^2) - \epsilon)$, for some $\epsilon>0$.

\paragraph{Indivisible Case:} Using two almost similar instances as in Gravin et al.~\cite{gravin}, we show that if $\theta \ge 2$ in the indivisible case no deterministic, truthful, individually rational and budget feasible mechanism exists with an approximation guarantee of $(\frac{5}{2} - \epsilon)$ compared to the fractional knapsack optimum, for some $\epsilon>0$. For contradiction, assume such a mechanism does exist. Consider the following two instances, both with budget B and 3 agents with equal valuations:
\[\mathcal{I}_1 = (B, \mathbf{v} = \mathbf{1}, \mathbf{c} = (c^*,c^*, c^*)) \quad \text{and} \quad \mathcal{I}_2 = (B, \mathbf{v} = \mathbf{1}, \mathbf{c} = (c^* /2,c^*, c^*)),\]
with $c^* = B/(2 - \frac{\epsilon}{2})$. Note that both instances satisfy (\ref{eqTheta}) for $\theta \ge 2$. In the first instance $\opt = 2 - \frac{\epsilon}{2}$. A mechanism can only satisfy all the properties if one agent is (fully) selected and this agent is paid at least $c^*$. Assume w.l.o.g. that this is agent 1. In the second instance $\opt = 2.5 - \frac{\epsilon}{2}$, so two agents must be (fully) selected to achieve the approximation guarantee. By truthfulness, the payment of agent 1 must be at least $c^*$, as otherwise agent 1 could have been better off declaring a cost of $c^*$. As the payment to the other agent (2 or 3) that is (fully) selected must also be at least $c^*$ by individual rationality, this results in a contradiction. 

Note that this instance can be adjusted to a specific value of $\theta$ by setting $c_1 = c^* / \theta$ in the second instance to prove that no deterministic, truthful, individually rational and budget feasible mechanism exists with an approximation guarantee of $(3 - \frac{1}{\theta} - \epsilon)$, for some $\epsilon>0$.

\section{Proof of Theorem \ref{Th:DivAndCap}
}\label{App:proofTheorem}

\begin{proof}
Proving that the mechanism is truthful and individually rational is identical to the proof of Lemma \ref{TrInRa}, when the definition $\rho_{i}(N,\mathbf{c}) = \frac{\min \{ v_i, V_{t_i}  \}}{\opt_{-i}(N, \mathbf{c})}$, $i \in N$, is used. Note that by the way we define $\mathbf{x^*}$, if in the second part of the mechanism an agent moves to the front (back) of the ordering, their fraction selected can only increase (decrease) or remain the same.

Proving that the mechanism is budget feasible is identical to the proof of Lemma \ref{BudFea}, except for the reasoning of the last two inequalities. The last inequality follows from the construction of $x^*$ and the second to last inequality follows as 
\[\opt(N, \mathbf{c}) \le \opt_{-i}(N, \mathbf{c}) + \min \big \{v_i,V_{t_i} \big \}  < (1 + \beta)\opt_{-i}(N, \mathbf{c}).\]

As Lemma \ref{LemLowBouB} still holds, the approximation guarantee follows from the proof of Lemma \ref{AppGua}, were the reasoning behind the approximation guarantee when agent $i^*$ is selected is slightly different:
\[ \min \{ v_{i^*}, V_{t_{i^*}}\} \ge \beta \opt_{-i^*}(N, \mathbf{c}) \ge \beta \opt(N, \mathbf{c}) - \beta \min \big \{ v_{i^*}, V_{t_{i^*}} \big \}  \quad \Leftrightarrow \] 
\[ v(\mathbf{x}) =  \min \big \{ v_{i^*}, V_{t_{i^*}} \big \} \ge \frac{\beta}{1 + \beta} \opt(N, \mathbf{c}).\]

The same reasoning of computational efficiency as in Appendix \ref{AppComEff} is applicable, which, altogether, proofs the theorem.  \hfill \qed
\end{proof}


\begin{thebibliography}{8}

\bibitem{singer}
Singer, Y.: Budget feasible mechanisms. In: Proceedings of the 51st Annual IEEE Symposium on Foundations of Computer Science, pp. 765--774. IEEE, (2010)

\bibitem{chen}
Chen, N., Gravin, N., Lu, P.: On the approximability of budget feasible mechanisms. In: Proceedings of the 22nd Annual ACM-SIAM Symposium on Discrete Algorithms (SODA), pp 685--699. SIAM, (2011)

\bibitem{gravin}
Gravin, N., Jin, Y., Lu, P., Zhang, C.: Optimal budget-feasible mechanisms for additive valuations. ACM Transactions on Economics and Computation (TEAC), \textbf{8}(4), 1--15 (2020)

\bibitem{Anari}
Anari, N., Goel, G., Nikzad, A.: Mechanism design for crowdsourcing: An optimal 1-1/e competitive budget-feasible mechanism for large markets. In: IEEE 55th Annual Symposium on Foundations of Computer Science (FOCS), pp. 266--275. IEEE, (2014)

\bibitem{Jalaly}
Jalaly Khalilabadi, P., Tardos, É.: 2018. Simple and efficient budget feasible mechanisms for monotone submodular valuations. In: International Conference on Web and Internet Economics (WINE), vol. 11316, pp 246--263. Springer, (2018)

\bibitem{Goel}
Goel, G., Nikzad, A., Singla, A.: Mechanism design for crowdsourcing markets with heterogeneous tasks. In: Proceedings of the 2nd AAAI Conference on Human Computation and Crowdsourcing. AAAI, (2014)

\bibitem{Leonardi}
Leonardi, S., Monaco, G., Sankowski, P., Zhang, Q.: Budget Feasible Mechanisms on Matroids. In: Proceedings of the 19th International Conference on Integer Programming and Combinatorial Optimization (IPCO), vol. 10328, pp. 368--379. Springer, (2017)

\bibitem{Chan}
Chan, H., Chen, J.: Truthful multi-unit procurements with budgets. In: International Conference on Web and Internet Economics (WINE), pp. 89--105. Springer, (2014)

\bibitem{tardos}
Archer, A., Tardos, É.: Truthful mechanisms for one-parameter agents. In: Proceedings 42nd IEEE Symposium on Foundations of Computer Science (FOCS), pp. 482--491. IEEE, (2001)

\bibitem{Dobzinski}
Dobzinski, S., Papadimitriou, C. H., Singer, Y.: Mechanisms for complement-free procurement. In: Proceedings of the 12th ACM Conference on Electronic Commerce (EC), pp. 273--282. (2011)

\bibitem{Bei}
Bei, X., Chen, N., Gravin, N., Lu, P.:  Budget feasible mechanism design: from prior-free to bayesian. In: Proceedings of the 44th annual ACM Symposium on Theory of Computing (STOC), pp. 449--458. (2012)

\bibitem{Amanatidis}
Amanatidis, G., Birmpas, G., Markakis, E.: Coverage, matching, and beyond: new results on budgeted mechanism design. In: International Conference on Web and Internet Economics (WINE), pp. 414--428. Springer, (2016)

\end{thebibliography}
\end{document}